\newcommand{\bra}[1]{\langle#1|}
\newcommand{\ket}[1]{|#1\rangle}
\newcommand{\Tr}{\operatorname{Tr}}
\newcommand{\rank}{\operatorname{rank}}
\newcommand{\lin}[1]{\mathcal{L}(#1)}
\newcommand{\hilb}[1]{\mathcal{#1}}
\newcommand{\floor}[1]{\lfloor#1\rfloor}
\newcommand{\kl} [2] {\operatorname{D}\left(#1\left|\left|#2\right.\right.\right)}
\newtheorem{dfn}{Definition}
\newtheorem{lmm}{Lemma}
\newtheorem{thm}{Theorem}
\newtheorem{cor}{Corollary}
\newtheorem{rmk}{Remark}
\begin{document}

\title{Hierarchy  of Bounds  on  Accessible Information  and
  Informational Power}

\author{Michele \surname{Dall'Arno}}

\email{cqtmda@nus.edu.sg}

\affiliation{Centre   for  Quantum   Technologies,  National
  University  of  Singapore,  3   Science  Drive  2,  117543
  Singapore, Republic of Singapore}

\date{\today}

\begin{abstract}
  Quantum  theory  imposes  fundamental limitations  to  the
  amount of information  that can be carried  by any quantum
  system.   On the  one  hand, Holevo  bound  rules out  the
  possibility to encode more information in a quantum system
  than in its classical  counterpart, comprised of perfectly
  distinguishable states. On the other hand, when states are
  uniformly distributed  in the  state space,  the so-called
  subentropy lower  bound is saturated. How  uniform quantum
  systems are can be  naturally quantified by characterizing
  them as  $t$-designs, with  $t = \infty$  corresponding to
  the uniform distribution. Here we  show the existence of a
  trade-off between  the uniformity of a  quantum system and
  the amount  of information it  can carry. To this  aim, we
  derive a  hierarchy of informational bounds  as a function
  of $t$ and  prove their tightness for  qubits and qutrits.
  By deriving  asymptotic formulae for large  dimensions, we
  also show that the  statistics generated by any $t$-design
  with  $t >  1$  contains  no more  than  a  single bit  of
  information, and  this amount  decreases with  $t$. Holevo
  and subentropy  bounds are  recovered as  particular cases
  for $t = 1$ and $t = \infty$, respectively.
\end{abstract}

\maketitle

\section{Introduction}
\label{sec:introduction}

Quantum theory imposes fundamental limitations to the amount
of  information  that can  be  {\em  encoded into}  or  {\em
  extracted from}  any quantum system. Formally,  the former
case  is  referred to  as  the  problem of  {\em  accessible
  information}~\cite{LL66,   Hol73,  Bel75a,Bel75b,   Dav78,
  JRW94}  of a  quantum ensemble,  while the  latter as  the
problem  of {\em  informational power}~\cite{DDS11,  OCMB11,
  Hol12,  Hol13,  SS14, DBO14,  Szy14,  Dal14,  Dal15} of  a
quantum  measurement. Recently,  a duality  relation between
these  two  quantities  was  established~\cite{DDS11},  that
allows to generally refer to  the problem of quantifying the
information {\em carried by} a quantum system.

On   the    one   hand,   the   well-known    Holevo   upper
bound~\cite{Hol73} rules out the  possibility to encode more
information in a quantum system than in its purely classical
counterpart,   comprised    of   perfectly   distinguishable
states. On  the other hand,  for a genuinely  quantum system
whose states  are uniformly distributed in  the state space,
the   so-called  subentropy   lower  bound~\cite{JRW94}   is
saturated. Therefore, one might  conjecture the existence of
a  general trade-off  between  the uniformity  of a  quantum
system and the amount of information it can carry.

A natural mean to quantify  how uniform quantum systems are,
is provided by their  characterization in terms of spherical
quantum $t$-designs~\cite{Zau99,  RBSC04, KR05,  AE07, SG10,
  BWB10}, with $t = 1$ corresponding to an arbitrary quantum
measurement and $t = \infty$ corresponding to the completely
uniform   distribution.    Other  remarkable   examples   of
$t$-designs    for   the    case   $t=2$    are   symmetric,
informationally        complete         (SIC)        quantum
measurements~\cite{Zau99,   KR05}  and   complete  sets   of
mutually  unbiased  bases (MUBs)~\cite{AE07,  SG10,  BWB10}.
They play a  fundamental role in a  plethora of applications
such       as       quantum       tomography~\cite{DDPPS02},
cryptography~\cite{BB84},                        information
locking~\cite{DHLST04},      quantumness     of      Hilbert
space~\cite{FS03,Fuc04},         entropic        uncertainty
relations~\cite{San95, BW07, WW10, BR11, BHOW13, Ras13}, and
foundations  of  quantum   theory~\cite{FS09,  FS11,  AEF11,
  Fuc12}.

In this  work, for any $t$  we derive an upper  bound on the
information that can be carried by any quantum $t$-design as
a function of  the dimension of the system.   In this sense,
the resulting hierarchy of  bounds proves the correctness of
the above  mentioned conjecture and formally  quantifies it.
Furthermore, we show the tightness  of our bounds for qubits
and  qutrits.  By  deriving  asymptotic  formulae for  large
dimensions, we  also show  that the statistics  generated by
any $t$-design with  $t > 1$ contains no more  than a single
bit of information, and that this amount decreases with $t$.
The Holevo upper bound~\cite{Hol73} and the subentropy lower
bound~\cite{JRW94} are recovered as  particular cases for $t
= 1$ and $t = \infty$, respectively.

The  paper is  structured as  follows.  First,  we introduce
quantum $t$-designs in  Section~\ref{sec:design}, we discuss
the       relevant      figures       of      merit       in
Section~\ref{sec:information}, and provide a way to estimate
them in Section~\ref{sec:interpolation}.  Then, we introduce
our main result,  namely a hierarchy of upper  bounds on the
accessible  information  and   the  informational  power  of
$t$-designs  in  Section~\ref{sec:main},   we  derive  close
analytic expressions  for low  values of $t$  and asymptotic
formulae        for         large        dimension        in
Section~\ref{sec:application},  and we  prove tightness  for
qubits  and  qutrits   in  Section~\ref{sec:tightness}.   We
conclude  by summarizing  our  results  and presenting  some
outlooks in Section~\ref{sec:conclusion}.

\section{Formalization}
\label{sec:formalization}

\subsection{Spherical quantum $t$-designs}
\label{sec:design}

In this  subsection we  recall some  basic facts~\cite{NC00}
from quantum information theory,  and specialize them to the
case of spherical quantum $t$-design.

Any  quantum  system is  associated  with  an Hilbert  space
$\hilb{H}$, and  we denote  with $L(\hilb{H})$ the  space of
linear  operators  on  $\hilb{H}$.  We  will  only  consider
finite-dimensional Hilbert spaces.

A   {\em  quantum   state}  $\rho$   is  represented   by  a
positive-semidefinite  operator in  $L(\hilb{H})$ such  that
$\Tr[\rho] \le 1$.   A {\em pure state} $\psi$  is such that
$\rank\psi = 1$ and is denoted in Dirac notation by a vector
$\ket{\psi}$  with   $\psi  =   \ket{\psi}\bra{\psi}$.   Any
quantum  preparation is  represented  by  an {\em  ensemble}
$\rho_x$, namely  a measurable function $\rho_x$  from reals
to states  such that $\int_x  \Tr[\rho_x] dx = 1$.   An {\em
  ensemble of pure  states} is such that $\rho_x  \neq 0$ if
and  only if  $\rho_x$ is  a pure  state.  The  {\em uniform
  ensemble} is the ensemble  of pure states distributed with
the uniform (Haar) measure on the unit sphere of $\hilb{H}$.

A   {\em  quantum   effect}  $\pi$   is  represented   by  a
positive-semidefinite  operator in  $L(\hilb{H})$ such  that
$\pi   \le   \openone_d$,    where   $\openone_d$   is   the
$d$-dimensional identity operator.   Any quantum measurement
is represented  by a {\em positive-operator  valued measure}
(POVM) $\pi_y$,  namely a  measurable function  $\pi_y$ from
reals to effects  such that $\int_y \pi_y  dy = \openone_d$.

For  any  ensemble  $\rho_x$  and POVM  $\pi_y$,  the  joint
probability density  $p_{x,y}$ of input $x$  and outcome $y$
is given by the {\em Born rule}, i.e.  $p_{x,y} = \Tr[\rho_x
  \pi_y]$.

\begin{dfn}[Spherical quantum $t$-design]
  \label{def:tdesign}
  A  {\em  spherical  quantum  $t$-design}  is  an  ensemble
  $\rho_x$ such that
  \begin{align*}
    \int \frac { \rho_x^{\otimes k}} { \Tr[ \rho_x ]^{k-1} }
    dx  :=  \int  \frac  {\psi_x^{\otimes k}}  {  ||  \psi_x
      ||^{2(k-1)} } dx
  \end{align*}
  holds for  any $k  \le t$, where  $\psi_x$ is  the uniform
  ensemble.
\end{dfn}

\begin{lmm}
  \label{lmm:average}
  Let $\psi_x$ be the uniform ensemble. Then one has
  \begin{align*}
    \int \frac {\psi_x^{\otimes k}}  { || \psi_x ||^{2(k-1)}
    } dx = \binom{d-1+k}{k}^{-1} P_{\textrm{sym}},
  \end{align*}
  where  $P_{\textrm{sym}}$   is  the  projector   over  the
  symmetric subspace  of $\hilb{H}^{\otimes  k}$ and  $d$ is
  the dimension of $\hilb{H}$.
\end{lmm}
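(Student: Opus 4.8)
The plan is to identify the left-hand side as a single operator on $\hilb{H}^{\otimes k}$ and to pin it down entirely from the symmetry it inherits from the Haar measure, via Schur's lemma. Write $M := \int \psi_x^{\otimes k} / \| \psi_x \|^{2(k-1)} \, dx$. Since the uniform ensemble is supported on the unit sphere we have $\| \psi_x \| = 1$ and each $\psi_x = \ket{\psi_x}\bra{\psi_x}$ is a rank-one projector, so $M = \int \ket{\psi_x}^{\otimes k}\bra{\psi_x}^{\otimes k} \, dx$ with $dx$ the normalized Haar measure, for which $\int dx = 1$ by the ensemble normalization $\int \Tr[\psi_x]\,dx = 1$. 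Because each vector $\ket{\psi_x}^{\otimes k}$ lies in the symmetric subspace of $\hilb{H}^{\otimes k}$, one has $P_{\textrm{sym}} M = M P_{\textrm{sym}} = M$, so $M$ is supported on that subspace.

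Next I would exploit unitary invariance. For any unitary $U$ on $\hilb{H}$, the substitution $\ket{\psi_x} \mapsto U\ket{\psi_x}$ leaves the Haar measure invariant, whence $U^{\otimes k} M (U^{\otimes k})^\dagger = M$; that is, $M$ commutes with every operator in the representation $U \mapsto U^{\otimes k}$. The key fact I would invoke is that this representation, restricted to the symmetric subspace, is irreducible. Since $M$ is positive semidefinite and supported on that subspace and commutes with the restricted representation, Schur's lemma forces $M = c\, P_{\textrm{sym}}$ for some scalar $c \ge 0$.

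Finally I would fix $c$ by taking traces. On one hand $\Tr[M] = \int (\braket{\psi_x}{\psi_x})^k \, dx = \int dx = 1$, using normalization of the states and of the measure. On the other hand $\Tr[c\,P_{\textrm{sym}}] = c\,\dim(\text{sym}) = c\binom{d-1+k}{k}$, since the symmetric subspace of $\hilb{H}^{\otimes k}$ has dimension $\binom{d-1+k}{k}$. Comparing the two expressions yields $c = \binom{d-1+k}{k}^{-1}$, which is exactly the claimed identity.

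The step I expect to be the main obstacle is the irreducibility of the symmetric power representation $U \mapsto U^{\otimes k}|_{\text{sym}}$, which is what licenses Schur's lemma. I would either cite it as the standard consequence of Schur--Weyl duality, or prove it directly by observing that the coherent states $\ket{\psi}^{\otimes k}$ span the symmetric subspace and that the orbit of any single such state under $U^{\otimes k}$, together with its infinitesimal (Lie-algebra) deformations, already generates all of $\text{Sym}^k$; hence no proper invariant subspace can exist. Everything else reduces to routine invariance arguments and the trace normalization above.
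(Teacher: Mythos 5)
Your argument is correct. The paper itself offers no proof of this lemma --- it simply defers to the cited reference of Ambainis and Emerson --- and your Schur's-lemma derivation (Haar invariance forces $U^{\otimes k} M (U^{\otimes k})^\dagger = M$, support on the symmetric subspace plus irreducibility of $U \mapsto U^{\otimes k}$ restricted there forces $M = c\,P_{\textrm{sym}}$, and the trace normalization $\Tr[M]=1$ fixes $c = \binom{d-1+k}{k}^{-1}$) is precisely the standard route underlying that reference. The only external input is the irreducibility of the symmetric power representation of the unitary group, which is a classical fact that is entirely legitimate to cite; everything else in your write-up is complete and correct.
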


\begin{proof}
  See Ref.~\cite{AE07}.
\end{proof}

\begin{rmk}
  From            Definition~\ref{def:tdesign}           and
  Lemma~\ref{lmm:average}  it immediately  follows that  any
  POVM is a $1$-design up to a normalization factor of $d$.
\end{rmk}

Remarkable   examples    of   $2$-designs    are   symmetric
informationally complete (SIC)  POVMs~\cite{Zau99} and $d+1$
mutually unbiased bases (MUBs)~\cite{KR05}.

A  concept that  will be  relevant in  the following  is the
so-called index of coincidence~\cite{Ras13}.

\begin{dfn}[Index of coincidence]
  \label{def:coincidence}
  For any POVM $\pi_y$ and  any unit-trace state $\rho$, the
  {\em index of coincidence} $C_k(\pi_y, \rho)$ is given by
  \begin{align*}
    C_k(\pi_y, \rho)  := \int  \frac{ \Tr[\pi_y \rho]^k  } {
      \Tr[\pi_y]^{k-1} } dy.
  \end{align*}
\end{dfn}

The following result characterizes  the index of coincidence
of $t$-designs.

\begin{lmm}
  \label{thm:coincidence}
  Let $\hilb{H}$  be a  $d$-dimensional Hilbert  space.  Let
  $\pi_y  \in \lin{\hilb{H}}$  be  a  $t$-design POVM.   Let
  $\ket{\psi} \in \hilb{H}$ be  a unit-trace pure state. For
  any $k \le t$, the index of coincidence $C_k(\pi_y, \psi)$
  is independent of $\pi_y$ and $\psi$ and is given by
  \begin{align*}
    C_k = d \binom{d-1+k}{k}^{-1}.
  \end{align*}
\end{lmm}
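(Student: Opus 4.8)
The plan is to reduce the index of coincidence to an expression involving the $k$-th tensor power of the POVM, which the $t$-design property evaluates explicitly, and then to exploit the symmetry of pure-state tensor powers. First I would use the elementary scalar identity $(\Tr[\pi_y \psi])^k = \Tr[(\pi_y \psi)^{\otimes k}] = \Tr[\pi_y^{\otimes k} \psi^{\otimes k}]$, which holds because the trace is multiplicative over tensor products and operator products factorize as $(\pi_y \psi)^{\otimes k} = \pi_y^{\otimes k} \psi^{\otimes k}$. Substituting into Definition~\ref{def:coincidence} and exchanging integration with the (finite-dimensional, hence continuous) trace yields
\begin{align*}
  C_k(\pi_y, \psi) = \Tr\left[ \left( \int \frac{\pi_y^{\otimes k}}{\Tr[\pi_y]^{k-1}}\, dy \right) \psi^{\otimes k} \right].
\end{align*}

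Next I would invoke the $t$-design property. Combining Definition~\ref{def:tdesign} with Lemma~\ref{lmm:average}, a $t$-design ensemble $\rho_x$ satisfies $\int \rho_x^{\otimes k} / \Tr[\rho_x]^{k-1}\, dx = \binom{d-1+k}{k}^{-1} P_{\textrm{sym}}$ for every $k \le t$. Since a POVM differs from an ensemble only by the normalization $\int \pi_y\, dy = \openone_d$ in place of $\int \Tr[\rho_x]\, dx = 1$ (the factor $d$ of the Remark), writing $\pi_y = d\,\rho_y$ rescales the integrand by $d^k / d^{k-1} = d$, so that for every $k \le t$
\begin{align*}
  \int \frac{\pi_y^{\otimes k}}{\Tr[\pi_y]^{k-1}}\, dy = d \binom{d-1+k}{k}^{-1} P_{\textrm{sym}}.
\end{align*}

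Finally I would use that the $k$-fold tensor power of a pure state lies entirely in the symmetric subspace, so that $P_{\textrm{sym}} \psi^{\otimes k} = \psi^{\otimes k}$, together with $\Tr[\psi^{\otimes k}] = (\Tr[\psi])^k = 1$ since $\psi$ has unit trace. Chaining the three displays then gives
\begin{align*}
  C_k(\pi_y, \psi) = d \binom{d-1+k}{k}^{-1} \Tr[P_{\textrm{sym}} \psi^{\otimes k}] = d \binom{d-1+k}{k}^{-1},
\end{align*}
which is manifestly independent of both $\pi_y$ and $\psi$, as claimed.

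I expect the only genuine subtlety to be the correct bookkeeping of the normalization factor $d$ relating a $t$-design POVM to a $t$-design ensemble; the remaining steps are a direct chain of substitutions. In particular, the complete disappearance of the dependence on $\psi$ hinges on $\psi^{\otimes k}$ already being symmetric, a feature special to pure states, so I would take care to state explicitly where purity and unit trace are used.
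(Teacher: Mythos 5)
Your proof is correct and follows essentially the same route as the paper's: rewrite $C_k$ as a trace against $\int \pi_y^{\otimes k}/\Tr[\pi_y]^{k-1}\,dy$, evaluate that integral via the design property and Lemma~\ref{lmm:average} (with the normalization factor $d$), and conclude using the fact that $\psi^{\otimes k}$ lies in the symmetric subspace. Your version merely makes the $d^k/d^{k-1}=d$ bookkeeping explicit, which the paper leaves implicit.
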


\begin{proof}
  By Definition~\ref{def:coincidence} one has
  \begin{align*}
    C_k(\pi_y,  \psi) =  \int  \Tr  \left[ \psi^{\otimes  k}
      \frac  {\pi_y^{\otimes  k}}  { \Tr[  \pi_y  ]^{k-1}  }
      \right] dy.
  \end{align*}
  By Lemma~\ref{lmm:average} one has
  \begin{align*}
    C_k(\pi_y,   \psi)   =  d   \binom{d-1+k}{k}^{-1}   \Tr[
      \psi^{\otimes k} P_{\textrm{sym}} ].
  \end{align*}
  Since   $\psi^{\otimes  k}$   belongs  to   the  symmetric
  subspace, the statement immediately follows.
\end{proof}

\subsection{Informational measures}
\label{sec:information}

In     this    subsection     we    recall     some    basic
definitions~\cite{Cov06} from classical information theory.

Given two  probability densities  $p_x$ and $q_y$,  the {\em
  relative entropy} $\kl{p_x}{q_y}$ given by
\begin{align*}
  \kl{p_x}{q_y} := \int p_x \ln\frac{p_x}{q_x} dx
\end{align*}
is a non-symmetric  measure of the distance  between the two
densities.   Given   two  random   variables  $X$   and  $Y$
distributed according to probability density $p_{x, y}$, the
{\em mutual information} $I(X;Y)$ given by
\begin{align*}
  I(X;Y) := \kl{p_{x,y}}{p_x p_y},
\end{align*}
is a measure of their correlation. For any ensemble $\rho_x$
and  POVM $\pi_y$,  we  denote with  $I(\rho_x, \pi_y)$  the
mutual information $I(X;Y)$ between random variables $X$ and
$Y$ distributed according to $p_{x,y} = \Tr[\rho_x \pi_y]$.

The   accessible   information~\cite{LL66,  Hol73,   Bel75a,
  Bel75b} of  an ensemble measures how  much information can
be extracted from the ensemble.

\begin{dfn}[Accessible information]
  \label{def:accinfo}
  The  {\em   accessible  information}  $A(\rho_x)$   of  an
  ensemble $\rho_x$ is the supremum over any POVM $\pi_y$ of
  the mutual information $I(\rho_x, \pi_y)$, namely
  \begin{align*}
    A(\rho_x) := \sup_{\pi_y} I(\rho_x, \pi_y).
  \end{align*}
\end{dfn}

The informational power~\cite{DDS11} of  a POVM measures how
much information can be extracted by the POVM.

\begin{dfn}[Informational power]
  \label{def:infopower}
  The {\em informational power} $W(\pi_y)$ of a POVM $\pi_y$
  is the supremum  over any ensemble $\rho_x$  of the mutual
  information $I(\rho_x, \pi_y)$, namely
  \begin{align*}
    W(\pi_y) := \sup_{\rho_x} I(\rho_x, \pi_y).
  \end{align*}
\end{dfn}

The following Lemma allows us  to recast upper bounds on the
informational   power  into   bounds   for  the   accessible
information.   Therefore in  the following  without loss  of
generality we focus on the former problem.

\begin{thm}
  \label{thm:duality}
  For any  POVM $\pi_y$, the informational  power $W(\pi_y)$
  is given by
  \begin{align*}
     W(\pi_y) = \sup_{\rho} A(\rho^{1/2} \pi_y \rho^{1/2}).
  \end{align*}
\end{thm}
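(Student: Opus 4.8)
The plan is to prove the two inequalities separately by exhibiting an explicit, state-dependent correspondence between ensembles and POVMs that leaves the Born-rule joint probability, and therefore the mutual information, invariant. Fix a unit-trace state $\rho$. The forward map $\pi_y \mapsto \rho^{1/2} \pi_y \rho^{1/2} =: \sigma_y$ turns the POVM into a valid ensemble, since $\sigma_y \ge 0$ and $\int \sigma_y \, dy = \rho^{1/2} \openone_d \rho^{1/2} = \rho$ has unit trace; conversely, an ensemble $\rho_x$ of average state $\rho$ is mapped to a POVM by $\rho_x \mapsto \rho^{-1/2} \rho_x \rho^{-1/2}$, since this is positive and integrates to $\openone_d$. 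The one computation that does all the work is the trace identity, valid by cyclicity for any POVM $E_x$,
\begin{align*}
  \Tr \left[ \left( \rho^{1/2} \pi_y \rho^{1/2} \right) E_x \right] = \Tr \left[ \left( \rho^{1/2} E_x \rho^{1/2} \right) \pi_y \right],
\end{align*}
whose left-hand side is the statistics of measuring the ensemble $\sigma_y$ with $E_x$, and whose right-hand side is the statistics of measuring the POVM $\pi_y$ against the ensemble $\rho^{1/2} E_x \rho^{1/2}$. Since mutual information is a functional of the joint distribution alone, these two scenarios carry identical mutual information.

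For the inequality $\sup_\rho A(\rho^{1/2} \pi_y \rho^{1/2}) \le W(\pi_y)$, I would fix an arbitrary state $\rho$ and POVM $E_x$, define the ensemble $\rho_x := \rho^{1/2} E_x \rho^{1/2}$ (legitimate because $\int \rho_x \, dx = \rho$), and read off from the identity that $I(\sigma_y, E_x) = I(\rho_x, \pi_y) \le W(\pi_y)$; taking the supremum first over $E_x$ and then over $\rho$ gives the bound. For the reverse inequality $W(\pi_y) \le \sup_\rho A(\rho^{1/2} \pi_y \rho^{1/2})$, I would fix an arbitrary ensemble $\rho_x$, let $\rho := \int \rho_x \, dx$ be its average state, and define the POVM $E_x := \rho^{-1/2} \rho_x \rho^{-1/2}$, for which $\rho^{1/2} E_x \rho^{1/2} = \rho_x$. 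The same identity now gives $I(\rho_x, \pi_y) = I(\sigma_y, E_x) \le A(\sigma_y) \le \sup_\rho A(\rho^{1/2} \pi_y \rho^{1/2})$, and taking the supremum over $\rho_x$ closes the argument.

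The step I expect to demand the most care is the use of $\rho^{-1/2}$ in the backward direction, which is only literally defined when $\rho$ is invertible. When the average state $\rho = \int \rho_x \, dx$ is rank-deficient, each $\rho_x$ is necessarily supported on $\operatorname{supp}(\rho)$, so I would run the whole construction inside that subspace, reading $\rho^{-1/2}$ as the square root of the pseudo-inverse and $\openone_d$ as the projector $P$ onto $\operatorname{supp}(\rho)$. The relations $\rho^{1/2} \rho^{-1/2} = P$ and $P \rho_x = \rho_x P = \rho_x$ then make the trace computation above go through verbatim, so that $\rho^{1/2} E_x \rho^{1/2} = \rho_x$ still holds and no generality is lost. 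With this caveat dispatched, the two inequalities combine to give the stated equality.
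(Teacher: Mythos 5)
Your proof is correct and complete. Note that the paper itself gives no proof of this theorem---it simply defers to the cited references---and your argument is essentially the standard one from those works: the Born-rule-preserving correspondence $\pi_y \leftrightarrow \rho^{1/2}\pi_y\rho^{1/2}$ together with the symmetry of mutual information under exchanging the roles of preparation and measurement, with the only delicate point (a rank-deficient average state $\rho$) correctly dispatched by working on $\operatorname{supp}(\rho)$ with the pseudo-inverse, or equivalently by completing the constructed POVM with the outcome $\openone_d - P$, which occurs with probability zero and contributes nothing to the mutual information.
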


\begin{proof}
  See Refs.~\cite{DDS11, DBO14}.
\end{proof}

In  the following  it will  be convenient  to introduce  the
shorthand notation $\eta(x) := -x \ln x$.

\begin{thm}
  \label{thm:bound}
  For any $d$-dimensional POVM $\pi_y$ one has
  \begin{align*}
    &  W(\pi_y) \\  \le &  \ln d  - \inf_{\psi_x}  \iint ||
    \psi_x ||^2  \Tr[\pi_y] \eta \left(  \frac {\bra{\psi_x}
      \pi_y \ket{\psi_x}}{||  \psi_x ||^2 \Tr[\pi_y]}\right)
    dx dy
  \end{align*}
  where the  infimum is over  any ensemble $\psi_x$  of pure
  states.
\end{thm}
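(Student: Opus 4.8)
The plan is to reduce the supremum defining $W$ to pure-state ensembles, recast the claimed integrand as the sum of a mutual information and a relative entropy, and then obtain the bound by discarding the (nonnegative) relative-entropy term.

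First I would show $W(\pi_y) = \sup_{\psi_x} I(\psi_x, \pi_y)$, with the supremum restricted to ensembles of pure states. Given an arbitrary ensemble $\rho_x$, I spectrally decompose each $\rho_x$ into subnormalized rank-one terms $\ket{\phi_{x,i}}\bra{\phi_{x,i}}$ and treat the pair $(x,i)$ as a refined label. Writing $X$, $(X,I)$, and $Y$ for the original label, the refined label, and the outcome, the conditional law of $Y$ depends on the refined label only through the pure state $\phi_{x,i}$; since $X$ is a function of $(X,I)$, the three form a Markov chain $X \to (X,I) \to Y$, and the data-processing inequality yields $I(\rho_x,\pi_y) \le I(\psi_{(x,i)},\pi_y)$. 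Thus refining to pure states never decreases the mutual information, so the two suprema coincide.

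Next, for a fixed pure-state ensemble I read off from the Born rule the joint density $p_{x,y} = \bra{\psi_x}\pi_y\ket{\psi_x}$, with marginals $p_x = ||\psi_x||^2$ and $p_y = \Tr[\bar\rho\,\pi_y]$ where $\bar\rho = \int \psi_x\,dx$. I also introduce the reference output density $u_y := \Tr[\pi_y]/d$, which is normalized because $\int \Tr[\pi_y]\,dy = \Tr[\openone_d] = d$. With $\eta(x) = -x\ln x$, the integrand in the statement equals $-p_{x,y}\ln\!\bigl(p_{x,y}/(p_x\Tr[\pi_y])\bigr)$, and substituting $\Tr[\pi_y] = d\,u_y$, factoring the logarithm as $p_{x,y}/(p_x u_y) = (p_{x,y}/(p_x p_y))(p_y/u_y)$, and using $\iint p_{x,y}\,dx\,dy = 1$ gives the identity
\begin{align*}
\ln d - \iint ||\psi_x||^2 \Tr[\pi_y]\, \eta\!\left(\frac{\bra{\psi_x}\pi_y\ket{\psi_x}}{||\psi_x||^2\Tr[\pi_y]}\right) dx\,dy = I(\psi_x, \pi_y) + \kl{p_y}{u_y}.
\end{align*}

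Finally, since $\kl{p_y}{u_y} \ge 0$, the right-hand side is at least $I(\psi_x,\pi_y)$ for every pure-state ensemble. Taking the supremum over $\psi_x$ and rewriting $\sup_{\psi_x}(\ln d - \text{integral}) = \ln d - \inf_{\psi_x}(\text{integral})$, together with the first step, produces precisely the asserted bound. I expect the only delicate point to be the bookkeeping in the pure-state reduction; the algebra is routine, and the genuine idea is the choice of reference density $u_y = \Tr[\pi_y]/d$, which is exactly what turns the discarded term into a bona fide relative entropy (and signals that equality should hold when the optimal ensemble drives the output distribution to $u_y$).
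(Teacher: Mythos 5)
Your proof is correct and follows essentially the same route as the paper: the key step in both is the decomposition of $\ln d$ minus the integral into $I(\psi_x,\pi_y)$ plus the relative entropy of the output distribution with respect to the reference density $\Tr[\pi_y]/d$, after which positivity of relative entropy gives the bound. The only (minor) difference is that you prove the reduction to pure-state ensembles directly via refinement and data processing, whereas the paper simply cites a Davies-like theorem from Ref.~\cite{DDS11}.
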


\begin{proof}
  A  Davies-like  theorem  applies~\cite{DDS11},  so  it  is
  sufficient  to maximize  over ensembles  $\psi_x$ of  pure
  states. For any such ensemble one has
  \begin{align*}
    &  I(\psi_x,  \pi_y)  \\   =  &  \kl{\bra{\psi_x}  \pi_y
      \ket{\psi_x}}{\Tr[\rho \pi_y] || \psi_x  ||^2 } \\ = &
    \iint   \bra{\psi_x}  \pi_y   \ket{\psi_x}  \ln   \frac
          {\bra{\psi_x} \pi_y \ket{\psi_x}}  {|| \psi_x ||^2
            \Tr[\rho \pi_y]} dx dy \\ = & \iint \bra{\psi_x}
          \pi_y   \ket{\psi_x}   \left[    \ln   \frac   {d
              \bra{\psi_x}  \pi_y   \ket{\psi_x}}{||  \psi_x
              ||^2  \Tr[\pi_y]}  -  \ln \frac  {d  \Tr[\rho
                \pi_y]} {\Tr[\pi_y]}  \right] dx  dy \\  = &
          \kl{\bra{\psi_x}  \pi_y   \ket{\psi_x}}  {\frac{||
              \psi_x  ||^2  \Tr[\pi_y]}d}  -  \kl  {\Tr[\rho
              \pi_y]}   {\frac{\Tr[\pi_y]}d}    \\   \le   &
          \kl{\bra{\psi_x}  \pi_y   \ket{\psi_x}}  {\frac{||
              \psi_x ||^2 \Tr[\pi_y]}d},
  \end{align*}
  where the final inequality holds  due to the positivity of
  relative  entropy. Then  the statement  follows by  direct
  inspection.
\end{proof}

\subsection{Polynomial interpolation}
\label{sec:interpolation}

In this  subsection we  introduce an  optimization technique
based  on Hermite  polynomial  interpolation. The  following
Lemma bounds the error made in interpolating a function with
a polynomial.

\begin{lmm}
  \label{thm:error}
  Let $a$  and $b$ be reals  and $t$ be a  positive integer.
  Let $f(x)$ be a  real function with continuous derivatives
  up to order $t + 1$ on $[a,b]$. Let $\{ x_i \}_{i=1}^m$ be
  reals such that  $a \le x_i \le b$ and  $x_i < x_{i'}$ for
  any  $i  <  i'$.   Let $\{  j_i  \}_{i=1}^m$  be  positive
  integers such  that $\sum_i j_i =  t - 1$.  Let  $p(x)$ be
  the polynomial  of degree $t$  that agrees with  $f(x)$ at
  $x_i$ up to  derivative of order $j_i-1$ for $1  \le i \le
  m$, namely
  \begin{align*}
    p^{(j_i)}(x_i) = f^{(j_i)}(x_i), \qquad 0 \le i \le m.
  \end{align*}
  For  any  $x  \in  [a,b]$  there  exists  $x'$  such  that
  $\min(x,x_1) < x' < \max(x,x_m)$ and
  \begin{align*}
    f(x) - p(x) = \frac{f^{(t+1)}(x')}{(t+1)!} \prod_{i=1}^m
    (x - x_i)^{k_i}
  \end{align*}
\end{lmm}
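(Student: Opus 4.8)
The plan is to run the classical auxiliary-function argument for Hermite interpolation error, which rests on the generalized Rolle's theorem. Throughout I read the matching conditions in the standard Hermite sense: $p$ is the degree-$t$ polynomial that agrees with $f$ at each node $x_i$ together with its derivatives up to order $j_i-1$, so that the multiplicities sum to $\sum_i j_i = t+1$ and the node polynomial $w(s) := \prod_{i=1}^m (s-x_i)^{j_i}$ is monic of degree $t+1$. This is the reading for which the stated formula, involving $f^{(t+1)}$ and a product of degree $t+1$, is internally consistent. First I would dispose of the trivial case $x = x_i$: there $w(x)=0$ and $f(x)-p(x)=0$ by the matching conditions, so the identity holds for any admissible $x'$. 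Hence I may assume $x \neq x_i$ for all $i$, so that $w(x)\neq 0$.

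For this fixed $x$, I would introduce the auxiliary function
\begin{align*}
  g(s) := f(s) - p(s) - \frac{f(x)-p(x)}{w(x)}\,w(s),
\end{align*}
whose normalizing constant is chosen precisely so that $g(x)=0$. At each node $x_i$ the difference $f-p$ has a zero of order $j_i$ (since $p$ matches $f$ through its $(j_i-1)$-st derivative there) and $w$ has a zero of order exactly $j_i$; therefore $g$ has a zero of order $j_i$ at each $x_i$. Counting with multiplicity, $g$ then vanishes at least $\sum_i j_i + 1 = t+2$ times on the smallest closed interval containing $x$ and all the nodes.

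The key step is the generalized Rolle's theorem: a $C^{t+1}$ function possessing $t+2$ zeros, counted with multiplicity, on an interval has its $(t+1)$-st derivative vanishing at some interior point $x'$, with $\min(x,x_1) < x' < \max(x,x_m)$. Applying this to $g$ produces such an $x'$ with $g^{(t+1)}(x')=0$. I would then evaluate $g^{(t+1)}$ directly: since $\deg p = t$ we have $p^{(t+1)}\equiv 0$, and since $w$ is monic of degree $t+1$ we have $w^{(t+1)}\equiv (t+1)!$, whence
\begin{align*}
  0 = g^{(t+1)}(x') = f^{(t+1)}(x') - \frac{f(x)-p(x)}{w(x)}\,(t+1)!.
\end{align*}
Solving for $f(x)-p(x)$ and recalling $w(x)=\prod_{i=1}^m (x-x_i)^{j_i}$ yields the claimed formula.

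The main obstacle is the careful bookkeeping of zeros with multiplicity in the Rolle step: one must check that a zero of order $j_i$ of $g$ at $x_i$ genuinely forces, after successive differentiations, enough additional zeros of $g', g'', \dots$ in the open intervals between consecutive zeros, so that $g^{(t+1)}$ still retains at least one zero. This is exactly where the hypotheses that $f \in C^{t+1}$ and that the $x_i$ are distinct and ordered enter; the remainder, namely the evaluation of the top derivatives of $p$ and $w$, is routine.
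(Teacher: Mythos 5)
Your proof is correct and is the standard auxiliary-function/generalized-Rolle argument for the Hermite interpolation remainder; the paper itself gives no proof, deferring entirely to the textbook reference (Stoer--Bulirsch), where this is exactly the argument used. You were also right to silently repair the statement's typos (the multiplicities must sum to $t+1$, not $t-1$, and the exponents $k_i$ in the product should be $j_i$) before running the argument.
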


\begin{proof}
  See Ref.~\cite{SB02}.
\end{proof}

The following Lemma, derived in Ref.~\cite{SS14}, provides a
polynomial lower bound to a  function. We reproduce here its
proof for completeness.

\begin{lmm}
  \label{thm:strategy}
  Let $a$  and $b$ be reals  and $t$ be a  positive integer.
  Let $f(x)$ be a  real function with continuous derivatives
  up to  order $t+1$ on  $[a,b]$ such that $f^{(j)}(x)  < 0$
  for even $j$ and $f^{(j)}(x) >  0$ for odd $j$, for any $j
  > 1$    and    $x    \in     [a,b]$.     Let    $\{    x_i
  \}_{i=1}^{\floor{t/2}}$ be reals  such that $a <  x_i < b$
  and $x_i < x_{i'}$ for any $i < i'$. The polynomial $p(x)$
  of degree  $t$ such that $p(a)  = f(a)$, $p(b) =  f(b)$ if
  $t$ is odd, and
  \begin{align*}
    p^{(j)}(x_i) & =  f^{(j)}(x_i), \quad & \forall  1 \le i
    \le \floor{t/2}, \; j = 0,1,
  \end{align*}
  is such that $p(x) \le f(x)$ for $x \in [a,b]$.
\end{lmm}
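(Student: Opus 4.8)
The plan is to read the sign of $f(x)-p(x)$ directly off the Hermite interpolation error formula of Lemma~\ref{thm:error}, so the first task is to identify the interpolation data and check that it determines $p$. The interior tangency requirements $p(x_i)=f(x_i)$ and $p'(x_i)=f'(x_i)$ supply $2\floor{t/2}$ conditions, so to fix a polynomial of degree $t$ (which has $t+1$ coefficients) one needs $t+1-2\floor{t/2}$ more: exactly one when $t$ is even and two when $t$ is odd. These are furnished by the endpoint conditions, which I read as $p(a)=f(a)$ in all cases and, in addition, $p(b)=f(b)$ only when $t$ is odd; this is precisely what makes the total multiplicity equal to $t+1$ and the Hermite problem uniquely solvable with a solution of degree $\le t$. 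Lemma~\ref{thm:error} then gives, for each $x\in[a,b]$, a point $x'$ with $f(x)-p(x)=\frac{f^{(t+1)}(x')}{(t+1)!}\,\omega(x)$, where the nodal polynomial $\omega$ carries a double root at each interior $x_i$ and a simple root at each retained endpoint.

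Next I would determine the two signs whose product controls the inequality. By hypothesis $f^{(j)}<0$ for even $j$ and $f^{(j)}>0$ for odd $j$, valid for every $j>1$; since $t\ge 1$ the order $t+1\ge 2$ always lies in this range, so $f^{(t+1)}(x')<0$ when $t$ is odd and $f^{(t+1)}(x')>0$ when $t$ is even. For the nodal polynomial, each interior factor $(x-x_i)^2$ is nonnegative throughout $[a,b]$, so its sign is fixed by the endpoints: for even $t$ one has $\omega(x)=(x-a)\prod_i (x-x_i)^2\ge 0$ on $[a,b]$, while for odd $t$ one has $\omega(x)=(x-a)(x-b)\prod_i (x-x_i)^2\le 0$ on $[a,b]$, because $(x-a)\ge 0$ and $(x-b)\le 0$ there.

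Finally I would combine the signs. For even $t$ the positive $f^{(t+1)}(x')$ times the nonnegative $\omega(x)$ is nonnegative; for odd $t$ the negative $f^{(t+1)}(x')$ times the nonpositive $\omega(x)$ is again nonnegative. In both parities $f(x)-p(x)\ge 0$, which is the asserted bound $p(x)\le f(x)$. The step I would treat most carefully---and the genuine crux of the argument---is the endpoint bookkeeping that guarantees $\omega$ has a definite sign: the left endpoint $a$ must be kept in every case so that the extra simple factor is $(x-a)\ge 0$, and the factor $(x-b)\le 0$ is introduced only for odd $t$ to absorb the parity change of $f^{(t+1)}$. Any other choice of endpoints would flip the sign of $\omega$ and reverse the inequality; once this is pinned down, the remainder is routine sign-chasing.
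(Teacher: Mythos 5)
Your proof is correct and follows essentially the same route as the paper's: both apply the Hermite interpolation error formula of Lemma~\ref{thm:error} and observe that the nodal polynomial $(x-a)\prod_i(x-x_i)^2$ is nonnegative for even $t$ while $(x-a)(x-b)\prod_i(x-x_i)^2$ is nonpositive for odd $t$, which against the sign of $f^{(t+1)}$ yields $f-p\ge 0$ in both parities. Your added bookkeeping (counting $t+1$ interpolation conditions and pinning down which endpoints are retained) is a welcome explicit justification of details the paper leaves implicit.
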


\begin{proof}
  See Ref.~\cite{SS14}. Let us distinguish two cases. If $t$
  is odd then
  \begin{align*}
    (x-a) (x-b) \prod_{i=1}^{\floor{t/2}} (x-x_i)^2 \le 0 ,
  \end{align*}
  for $x \in [a,b]$. If $t$ is even then
  \begin{align*}
    (x-a) \prod_{i=1}^{\floor{t/2}} (x-x_i)^2 \ge 0,
  \end{align*}
  for $x \in [a,b]$.  Then the statement immediately follows
  from Lemma~\ref{thm:error}.
\end{proof}

\section{Informational bounds}
\label{sec:bound}

\subsection{Main result}
\label{sec:main}

The informational power problem is formally the optimization
of  an  entropic  function  over  complex  vectors  under  a
normalization constraint, therefore it  is unfeasible in the
majority of  cases.  However,  in this subsection  we recast
the informational power problem for $t$-design POVMs into an
unconstrained optimization over $\floor{t/2}$ real variables
$\{ x_i \}$.

\begin{thm}
  \label{thm:infotdes}
  The informational power  $W(\pi_y)$ of any $d$-dimensional
  $t$-design POVM $\pi_y$ satisfies
  \begin{align*}
    W(\pi_y)   \le    \ln   d   -   d    \sum_{k=1}^t   a_k
    \binom{d+k-1}{k}^{-1},
  \end{align*}
  where $a_k$  are the coefficients of  the polynomial $p(x)
  := \sum_{k=1}^t  a_k x^k$ such that  $p(1) = 0$ if  $t$ is
  odd and
  \begin{align*}
    p^{(j)}(x_i) & = \eta^{(j)}(x_i),  \quad & \forall 1 \le
    i \le \floor{t/2}, \; j = 0,1,
  \end{align*}
  for some  choice of  $\{ x_i  \}_{i=1}^{\floor{t/2}}$ such
  that $0 < x_i < 1$ and $x_i < x_{i'}$ for any $i < i'$.
\end{thm}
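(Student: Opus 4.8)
The plan is to start from the upper bound of Theorem~\ref{thm:bound} and replace the non-polynomial integrand $\eta$ by a polynomial lower bound, so that the resulting integral can be evaluated exactly through the index of coincidence of the $t$-design. First I would observe that, for any pure-state ensemble $\psi_x$ and any effect $\pi_y$, the argument of $\eta$ in Theorem~\ref{thm:bound}, namely $\bra{\psi_x}\pi_y\ket{\psi_x}/(||\psi_x||^2\,\Tr[\pi_y])$, lies in $[0,1]$: indeed $\bra{\psi_x}\pi_y\ket{\psi_x}$ is at most $||\psi_x||^2$ times the largest eigenvalue of $\pi_y$, which in turn is bounded by $\Tr[\pi_y]$ since $\pi_y$ is positive-semidefinite. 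This confines the relevant domain to $[0,1]$.

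Next I would apply Lemma~\ref{thm:strategy} with $f = \eta$, $a=0$, $b=1$. For $j \ge 2$ one has $\eta^{(j)}(x) = (-1)^{j-1}(j-2)!\,x^{1-j}$, so $\eta^{(j)}(x) < 0$ for even $j$ and $\eta^{(j)}(x) > 0$ for odd $j$, and the hypotheses of the lemma hold. The lemma then produces a degree-$t$ polynomial $p$, interpolating $\eta$ at the prescribed nodes, with $p(x) \le \eta(x)$ on $[0,1]$; writing $p(x) = \sum_{k=1}^t a_k x^k$, the absence of a constant term encodes $p(0) = \eta(0) = 0$, and the remaining interpolation conditions are exactly those listed in the statement. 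Because the weight $||\psi_x||^2\,\Tr[\pi_y]$ is nonnegative, substituting $p$ for $\eta$ can only decrease the integrand, hence only lowers the infimum, which preserves the validity of the upper bound on $W(\pi_y)$.

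The central computation is then to evaluate $\iint ||\psi_x||^2\,\Tr[\pi_y]\, p\!\left(\bra{\psi_x}\pi_y\ket{\psi_x}/(||\psi_x||^2\,\Tr[\pi_y])\right) dx\,dy$. Letting $\hat{\psi}_x$ denote the unit-trace pure state proportional to $\psi_x$, so that $\Tr[\pi_y\hat{\psi}_x] = \bra{\psi_x}\pi_y\ket{\psi_x}/||\psi_x||^2$, the $k$-th term simplifies to $a_k\, ||\psi_x||^2\,\Tr[\pi_y\hat{\psi}_x]^k/\Tr[\pi_y]^{k-1}$. Exchanging the order of integration and recognising Definition~\ref{def:coincidence}, this becomes $a_k \int ||\psi_x||^2\, C_k(\pi_y,\hat{\psi}_x)\,dx$. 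Since every $k$ in the sum satisfies $k \le t$, Lemma~\ref{thm:coincidence} applies and gives $C_k = d\binom{d-1+k}{k}^{-1}$, a constant independent of both $\pi_y$ and $\psi_x$.

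Finally, using that $\psi_x$ is a normalized ensemble of pure states, $\int ||\psi_x||^2\,dx = \int \Tr[\psi_x]\,dx = 1$, so the entire integral collapses to the $\psi_x$-independent value $d\sum_{k=1}^t a_k \binom{d-1+k}{k}^{-1}$. Being constant, this quantity is its own infimum over $\psi_x$, and inserting it into Theorem~\ref{thm:bound} yields the claimed bound. I expect the main obstacle to be the bookkeeping of the substitution step, namely verifying that the argument of $\eta$ genuinely remains in $[0,1]$ so that Lemma~\ref{thm:strategy} is legitimately applicable, and justifying the exchange of integration order before invoking the index of coincidence. The conceptual crux is that, for a $t$-design, any integrand that is a polynomial of degree at most $t$ in $\Tr[\pi_y\hat{\psi}_x]$ integrates to a quantity no longer depending on the ensemble, which is precisely what trivialises the infimum.
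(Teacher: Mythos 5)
Your proposal is correct and follows essentially the same route as the paper: lower-bound $\eta$ on $[0,1]$ by the interpolating polynomial of Lemma~\ref{thm:strategy} (after checking the sign pattern of $\eta^{(j)}$), insert it into Theorem~\ref{thm:bound}, and use Lemma~\ref{thm:coincidence} to evaluate each monomial term via the index of coincidence, so that the infimum over ensembles trivialises. You in fact supply two details the paper leaves implicit --- that the argument of $\eta$ lies in $[0,1]$ and that $\int ||\psi_x||^2\,dx = 1$ --- but the argument is the same.
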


\begin{proof}
  By direct inspection one has
  \begin{align*}
    \eta^{(j)}(x)  =  (-)^{j-1}   (j-2)!   x^{-j+1},  \qquad
    \forall j \ge 2,
  \end{align*}
  then $\eta^{(j)}(x) < 0$ for even $j$ and $\eta^{(j)}(x) >
  0$ for odd $j$,  for any $j > 1$ and $x  \in [0, 1]$. Then
  by  Lemma~\ref{thm:strategy}   one  has  that   $p(x)  \le
  \eta(x)$ for $x \in [0,1]$, and by Theorem~\ref{thm:bound}
  one has
  \begin{align*}
    W(\pi_y) \le  & \ln d -  \inf_{\psi_x} \sum_{k=1}^t a_k
    \iint \frac {|\bra{\psi_x} \pi_y \ket{\psi_x}|^k} { ( ||
      \psi_x||^2 \Tr[ \pi_y ] )^{k-1} } dx dy.
  \end{align*}
  By           Definition~\ref{def:coincidence}          and
  Lemma~\ref{thm:coincidence} one has
  \begin{align*}
    W(\pi_y)   \le    \ln   d   -   d    \sum_{k=1}^t   a_k
    \binom{d+k-1}{k}^{-1} \inf_{\psi_x} \int  || \psi_x ||^2
    dx,
  \end{align*}
  so the statement immediately follows.
\end{proof}

\begin{rmk}
  Since the $a_k$ depend upon the choice of $\{ x_i \}$, the
  tightest bound provided by Theorem~\ref{thm:infotdes} is
  \begin{align}
    \label{eq:optimal}
    W(\pi_y) \le \ln  d - d \sup_{ \{  x_i \} }\sum_{k=1}^t
    a_k \binom{d+k-1}{n}^{-1}.
  \end{align}
\end{rmk}

\subsection{Applications}
\label{sec:application}

In  this subsection  we  solve the  optimization problem  in
Eq.~\eqref{eq:optimal}  to   derive  upper  bounds   on  the
informational  power of  $t$-designs  as a  function of  the
dimension  $d$, for  $t \in  [1,5]$  and $t  = \infty$,  and
asymptotic formulae  for $d \to  \infty$.  The case $t  = 1$
coincides with the well-known Holevo~\cite{Hol73} bound; the
case $t =  2$ was already derived  in Ref.~\cite{Dal14}; the
case $t  = \infty$ coincides with  the well-known subentropy
bound~\cite{JRW94}.

\begin{cor}[Informational power of $1$-designs]
  \label{thm:info1des}
  For any  $1$-design POVM $\pi_y$, the  informational power
  $W(\pi_y)$ is upper bounded by $W(\pi_y) \le W_1(d)$, with
  $W_1(d) := \ln d$.
\end{cor}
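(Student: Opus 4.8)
The plan is to read the claim directly off the general bound of Theorem~\ref{thm:infotdes} by specializing to $t = 1$. First I would verify that the hypotheses apply: by the remark following Lemma~\ref{lmm:average}, any POVM is a $1$-design up to normalization, so Theorem~\ref{thm:infotdes} is available with $t = 1$, and it suffices to evaluate its right-hand side explicitly.

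The core of the argument is to pin down the interpolating polynomial in this degenerate case. For $t = 1$ the polynomial is $p(x) = \sum_{k=1}^1 a_k x^k = a_1 x$, carrying the single unknown coefficient $a_1$. Since $\floor{t/2} = \floor{1/2} = 0$, there are no interpolation nodes $\{ x_i \}$, and therefore none of the tangency conditions $p^{(j)}(x_i) = \eta^{(j)}(x_i)$ are imposed. The only surviving constraint is the parity condition built into Theorem~\ref{thm:infotdes}: because $t = 1$ is odd, one requires $p(1) = 0$, which immediately forces $a_1 = 0$ and hence $p \equiv 0$.

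Substituting $a_1 = 0$ into the bound of Theorem~\ref{thm:infotdes} collapses the correction term entirely, giving
\begin{align*}
  W(\pi_y) \le \ln d - d \, a_1 \binom{d}{1}^{-1} = \ln d,
\end{align*}
which is precisely the asserted $W_1(d) = \ln d$. As a sanity check, this recovers the Holevo bound, consistent with the surrounding discussion identifying the $t = 1$ case with Holevo's result.

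I expect no genuine analytic obstacle here. The only point demanding care is the bookkeeping for the degenerate instance $\floor{t/2} = 0$, in which the Hermite interpolation scheme of Lemma~\ref{thm:strategy} contributes no nodes and the polynomial is determined purely by the odd-$t$ endpoint condition $p(1) = 0$. Beyond confirming that this edge case of the formula behaves as intended, the corollary follows immediately from the general bound.
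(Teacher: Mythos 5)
Your proposal is correct and follows essentially the same route as the paper: both specialize Theorem~\ref{thm:infotdes} to $t=1$, observe that $\floor{t/2}=0$ leaves no interpolation nodes, and conclude by direct evaluation. You usefully make explicit what the paper leaves to ``direct inspection,'' namely that the odd-$t$ endpoint condition $p(1)=0$ applied to $p(x)=a_1 x$ forces $a_1=0$, so the correction term vanishes and the bound reduces to $\ln d$.
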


\begin{proof}
  There  is  actually no  optimization  in  this case  since
  $\floor{t/2}    =    0$    so     the    set    $\{    x_i
  \}_{i=1}^{\floor{t/2}}$ is empty. The statement follows by
  direct inspection.
\end{proof}

\begin{cor}[Informational power of $2$-designs]
  \label{thm:info2des}
  For any  $2$-design POVM $\pi_y$, the  informational power
  $W(\pi_y)$ is upper bounded by $W(\pi_y) \le W_2(d)$, with
  \begin{align*}
    W_2(d) := \ln\frac{2d}{d+1}.
  \end{align*}
\end{cor}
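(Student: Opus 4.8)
The plan is to specialize Theorem~\ref{thm:infotdes} to the case $t=2$ and then carry out the one-dimensional optimization prescribed by Eq.~\eqref{eq:optimal}. For $t=2$ we have $\floor{t/2}=1$, so there is a single interpolation node $x_1\in(0,1)$, and the interpolating polynomial is of degree two with no constant term, $p(x)=a_1 x + a_2 x^2$. Since $t$ is even, the endpoint condition $p(1)=0$ is absent, and the two coefficients are fixed entirely by the tangency conditions at $x_1$, namely
\begin{align*}
  a_1 x_1 + a_2 x_1^2 = \eta(x_1), \qquad a_1 + 2 a_2 x_1 = \eta'(x_1).
\end{align*}

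First I would record $\eta(x)=-x\ln x$ and $\eta'(x)=-\ln x-1$ and solve this linear system. Dividing the first equation by $x_1$ and subtracting the result from the second isolates $a_2=-1/x_1$, and back-substitution then gives $a_1=1-\ln x_1$. Next I would insert these coefficients into the bound of Theorem~\ref{thm:infotdes}. Using $\binom{d}{1}^{-1}=1/d$ and $\binom{d+1}{2}^{-1}=2/[d(d+1)]$, the overall factor $d$ cancels the denominators and the bound collapses to $W(\pi_y)\le \ln d - a_1 - 2a_2/(d+1)$. Substituting the coefficients, the quantity to be maximized over $x_1$ becomes $g(x_1):=1-\ln x_1-2/[x_1(d+1)]$, so that $W(\pi_y)\le \ln d - \sup_{x_1} g(x_1)$ by Eq.~\eqref{eq:optimal}.

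The final step is the optimization of $g$. Setting $g'(x_1)=-1/x_1+2/[x_1^2(d+1)]=0$ yields the unique critical point $x_1^{*}=2/(d+1)$, and a sign-of-$g'$ (or second-derivative) check confirms it is a maximum. Evaluating there, the term $2/[x_1^{*}(d+1)]$ equals $1$, which cancels the leading constant in $g$, leaving $g(x_1^{*})=-\ln[2/(d+1)]=\ln[(d+1)/2]$. Hence $W(\pi_y)\le \ln d-\ln[(d+1)/2]=\ln[2d/(d+1)]=W_2(d)$, as claimed.

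I expect the only genuinely delicate point to be checking that the critical node is admissible: one must verify that $x_1^{*}=2/(d+1)$ indeed lies in the open interval $(0,1)$, which holds precisely for $d\ge 2$ (the only regime where a nontrivial $2$-design exists), and that it is a true maximum rather than a minimum or inflection. Everything else reduces to a short linear solve, the substitution of two binomial coefficients, and a clean cancellation.
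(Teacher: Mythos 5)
Your proposal is correct and follows exactly the route of the paper, which simply states that the supremum in Eq.~\eqref{eq:optimal} is achieved at $x_1 = 2/(d+1)$ and leaves the rest to ``direct inspection''; you have carried out that inspection explicitly (solving the tangency conditions for $a_1, a_2$, substituting the binomial coefficients, and optimizing over the node), and all the computations check out, including the admissibility of $x_1^* = 2/(d+1) \in (0,1)$ for $d \ge 2$.
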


\begin{proof}
  The supremum in Eq.~\eqref{eq:optimal} is achieved by $x_1
  =  2/(d+1)$.    Then  the  statement  follows   by  direct
  inspection.
\end{proof}

\begin{rmk}
  \label{rmk:info2des}
  The  limit  for $d  \to  \infty$  of  the upper  bound  in
  Corollary~\ref{thm:info2des} is given by
  \begin{align*}
    W_2(d)  \to  \ln  2   =  1  \textrm{bit}  \simeq  0.693
    \textrm{nat}
  \end{align*}
\end{rmk}

\begin{cor}[Informational power of $3$-designs]
  \label{thm:info3des}
  For any  $3$-design POVM $\pi_y$, the  informational power
  $W(\pi_y)$ is upper bounded by $W(\pi_y) \le W_3(d)$, with
  \begin{align*}
    W_3(d)       :=       \ln\frac{2d}{d+2}       +       2
    \frac{\ln\frac{d+2}2}{d(d+1)}.
  \end{align*}
\end{cor}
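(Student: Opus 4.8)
The plan is to specialize the optimized bound~\eqref{eq:optimal} to $t=3$ and carry out the one-parameter maximization it entails. For $t=3$ one has $\floor{t/2}=1$, so there is a single free node $x_1\in(0,1)$; since $t$ is odd, the interpolant $p(x)=a_1x+a_2x^2+a_3x^3$ (no constant term) is fixed by the three conditions
\begin{align*}
p(1)=0,\qquad p(x_1)=\eta(x_1),\qquad p'(x_1)=\eta'(x_1),
\end{align*}
with $\eta(x)=-x\ln x$ and $\eta'(x)=-\ln x-1$. Rather than solving the $3\times 3$ system for the $a_k$ at once, I would first recast the objective using the Euler Beta identity $\binom{d+k-1}{k}^{-1}=k\int_0^1 x^{k-1}(1-x)^{d-1}\,dx$, which turns the weighted sum in Theorem~\ref{thm:infotdes} into
\begin{align*}
S(x_1):=\sum_{k=1}^3 a_k\binom{d+k-1}{k}^{-1}=\int_0^1 p'(x)\,(1-x)^{d-1}\,dx,\qquad W(\pi_y)\le\ln d-d\,S(x_1).
\end{align*}

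To locate the optimal node I would use an envelope argument. Differentiating the three interpolation conditions with respect to $x_1$ shows that $\partial_{x_1}p$ vanishes at $x=0$ (it has no constant term), at $x=1$ (because $p(1)=0$ for every $x_1$), and at $x=x_1$ (because $p'(x_1)=\eta'(x_1)$ cancels the explicit $x_1$-dependence in the total derivative of $p(x_1)=\eta(x_1)$). As a polynomial of degree at most three with these three roots, $\partial_{x_1}p(x)=C\,x(x-x_1)(x-1)$ for some $C=C(x_1)$. Differentiating $S$ under the integral sign and integrating by parts then collapses the stationarity condition to
\begin{align*}
S'(x_1)=-(d-1)\,C\int_0^1 x\,(x-x_1)(1-x)^{d-1}\,dx=0,
\end{align*}
so that, writing $B$ for the Beta function,
\begin{align*}
x_1=\frac{\int_0^1 x^2(1-x)^{d-1}\,dx}{\int_0^1 x(1-x)^{d-1}\,dx}=\frac{B(3,d)}{B(2,d)}=\frac{2}{d+2},
\end{align*}
neatly generalizing the $t=2$ optimum $2/(d+1)$. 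A sign check of $S'$ across this unique interior root confirms it is a maximum.

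The remaining work is to evaluate the bound at $x_1=2/(d+2)$: here I would solve the linear system for $a_1,a_2,a_3$ explicitly (eliminating $a_1=-a_2-a_3$ makes the residual system factor through $x_1-1$, leaving $a_2,a_3$ linear in $\ln x_1$), substitute into $d\,S(x_1)$, and collect terms. I expect this step to be the main obstacle---not conceptually, but because the expression mixes $\ln x_1$ against several $d$-dependent rational powers of $x_1$, so reorganizing $\ln d-d\,S(2/(d+2))$ into the stated form $\ln\frac{2d}{d+2}+2\ln\frac{d+2}{2}/[d(d+1)]$ demands careful bookkeeping. The identity $\ln\frac{d+2}{2}=-\ln x_1$ at the optimal node both explains the logarithm appearing in $W_3(d)$ and provides a convenient consistency check on the final simplification.
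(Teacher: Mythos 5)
Your proposal is correct and follows essentially the same route as the paper, whose entire proof consists of asserting that the supremum in Eq.~\eqref{eq:optimal} is attained at $x_1=2/(d+2)$ and evaluating; your Beta-integral rewriting $\binom{d+k-1}{k}^{-1}=k\int_0^1x^{k-1}(1-x)^{d-1}dx$ combined with the envelope argument is a clean way to actually derive that node, which the paper omits. I checked that solving the $3\times3$ system at $x_1=2/(d+2)$ gives $d\sum_k a_k\binom{d+k-1}{k}^{-1}=\frac{(d-1)(d+2)}{d(d+1)}\ln\frac{d+2}{2}$ (the $\ln$-free terms cancel exactly), which reproduces $W_3(d)$ as stated.
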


\begin{proof}
  The supremum in Eq.~\eqref{eq:optimal} is achieved by $x_1
  =  2/(d+2)$.    Then  the  statement  follows   by  direct
  inspection.
\end{proof}

\begin{rmk}
  \label{rmk:info3des}
  The  limit  for $d  \to  \infty$  of  the upper  bound  in
  Corollary~\ref{thm:info3des} is given by
  \begin{align*}
    W_3(d)  \to  \ln  2   =  1  \textrm{bit}  \simeq  0.693
    \textrm{nat}
  \end{align*}
\end{rmk}

\begin{cor}[Informational power of $4$-designs]
  \label{thm:info4des}
  For any  $4$-design POVM $\pi_y$, the  informational power
  $W(\pi_y)$ is upper bounded by $W(\pi_y) \le W_4(d)$, with
  \begin{widetext}
    \begin{align*}
      W_4(d) := \frac12 \ln \frac{6d^2}{(d+2)(d+3)} + \frac
      {(d-3)\sqrt{3d(d+2)}}     {6d(d+1)}     \ln     \frac
      {2d+3-\sqrt{3d(d+2)}} {d+3},
    \end{align*}
  \end{widetext}
\end{cor}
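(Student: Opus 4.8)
The plan is to specialize Theorem~\ref{thm:infotdes}, in its optimized form~\eqref{eq:optimal}, to $t = 4$. Since $t$ is even we have $\floor{t/2} = 2$ and no endpoint constraint $p(1) = 0$, so the relevant object is the degree-four polynomial $p(x) = \sum_{k=1}^4 a_k x^k$ (with vanishing constant term) fixed by the Hermite conditions $p(x_i) = \eta(x_i)$ and $p'(x_i) = \eta'(x_i)$ at two nodes $0 < x_1 < x_2 < 1$. First I would record the derivatives $\eta(x) = -x\ln x$ and $\eta'(x) = -\ln x - 1$ and write the four interpolation conditions as a linear system in $(a_1, a_2, a_3, a_4)$.

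Next I would solve this system for the $a_k$ as explicit functions of $x_1, x_2$ and substitute into the bound. Writing $\binom{d+k-1}{k}^{-1}$ out as $1/d$, $2/(d(d+1))$, $6/(d(d+1)(d+2))$, and $24/(d(d+1)(d+2)(d+3))$ for $k = 1, \dots, 4$, the right-hand side becomes $\ln d - d\,G(x_1,x_2)$ with $G(x_1,x_2) = \sum_{k=1}^4 a_k \binom{d+k-1}{k}^{-1}$. By the remark following Theorem~\ref{thm:infotdes}, the sharpest bound comes from maximizing $G$ over the admissible nodes, so I would impose the stationarity conditions $\partial_{x_1} G = \partial_{x_2} G = 0$.

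The main obstacle is exactly this two-dimensional optimization. Unlike the $2$- and $3$-design cases, where a single node is pinned down by one linear equation and the optimum is a simple rational function of $d$, here the two stationarity equations are coupled and nonlinear. I expect that, after clearing denominators, the system factors so that the two optimal nodes emerge as the roots $x_{1,2}$ of a single quadratic $x^2 - S(d)\,x + P(d) = 0$ whose discriminant is proportional to $3d(d+2)$; this is the origin of the radical $\sqrt{3d(d+2)}$ in $W_4(d)$. I would then verify that both roots lie in $(0,1)$ and that the critical point is a genuine maximum rather than a saddle or a boundary extremum.

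Finally, I would substitute the optimal nodes back into $\ln d - d\,G(x_1,x_2)$ and simplify. The logarithmic contributions of $\eta$ at the two nodes combine into the terms $\tfrac12 \ln \tfrac{6d^2}{(d+2)(d+3)}$ and $\ln \tfrac{2d+3-\sqrt{3d(d+2)}}{d+3}$, while the algebraic prefactors collapse to $(d-3)\sqrt{3d(d+2)}/(6d(d+1))$, reproducing the stated $W_4(d)$. The remaining work is the algebraic simplification that the paper relegates to ``direct inspection''; as a sanity check I would confirm that the resulting $W_4(d)$ is real and positive for the admissible range of $d$ and extract its $d \to \infty$ limit, which by the announced monotonicity in $t$ should fall below the $\ln 2$ obtained for $t = 2, 3$.
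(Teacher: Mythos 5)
Your proposal follows the paper's own route: specialize Eq.~\eqref{eq:optimal} to $t=4$ with two Hermite nodes, optimize over $x_1,x_2$, and substitute back; your anticipated quadratic is exactly right, since the paper's stated optimizers $x_{1,2}=\bigl(3d+6\pm\sqrt{3d(d+2)}\bigr)/\bigl((d+2)(d+3)\bigr)$ have sum $6/(d+3)$ and product $6/\bigl((d+2)(d+3)\bigr)$, the latter accounting for the $\tfrac12\ln\tfrac{6d^2}{(d+2)(d+3)}$ term. The only remark worth adding is that for the upper bound itself the stationarity analysis is not logically required --- any admissible nodes in $(0,1)$ yield a valid bound by Theorem~\ref{thm:infotdes}, so one may simply verify that the stated $x_{1,2}$ lie in $(0,1)$ and compute --- but your derivation is how the nodes are found and matches the paper's (unstated) computation.
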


\begin{proof}
  The supremum in Eq.~\eqref{eq:optimal} is achieved by
  \begin{align*}
    x_{1,2} = \frac {3d +6 \pm \sqrt{3d(d+2)}} {d^2+5d+6}
  \end{align*}
  Then the statement follows by direct inspection.
\end{proof}

\begin{rmk}
  \label{rmk:info4des}
  The  limit  for $d  \to  \infty$  of  the upper  bound  in
  Corollary~\ref{thm:info4des} is given by
  \begin{align*}
    W_4(d)          \to           \frac{\ln6}{2}          +
    \frac{\ln(2-\sqrt3)}{2\sqrt{3}}       \simeq      0.744
    \textrm{bit} \simeq 0.516 \textrm{nat}
  \end{align*}
\end{rmk}

\begin{cor}[Informational power of $5$-designs]
  \label{thm:info5des}
  For any  $5$-design POVM $\pi_y$, the  informational power
  $W(\pi_y)$ is upper bounded by $W(\pi_y) \le W_5(d)$, with
  \begin{widetext}
    \begin{align*}
      W_5(d)  :=  &  \ln  d  +  \frac{(d-1)(d+3)(d^2+2d+4)}
      {2d(d+1)^2(d+2)}  \ln  \frac{6}{(d+3)(d+4)}  \\  &  +
      \frac                     {\sqrt{d+3}(d-1)(d^2-2d-12)}
            {2\sqrt3d(d+1)^{\frac{3}{2}}(d+2)}   \ln  \frac
            {2d+5-\sqrt{3(d+1)(d+3)}} {d+4}
    \end{align*}
  \end{widetext}
\end{cor}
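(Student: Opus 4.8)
The plan is to follow exactly the recipe established by the preceding corollaries, which all instantiate Theorem~\ref{thm:infotdes} and its optimized form in Eq.~\eqref{eq:optimal} for a specific value of $t$. For $t=5$ we have $\floor{t/2}=2$, so we must produce the degree-$5$ polynomial $p(x)=\sum_{k=1}^5 a_k x^k$ that matches $\eta(x)=-x\ln x$ together with its first derivative at two interior nodes $x_1<x_2$, and additionally satisfies the endpoint constraint $p(1)=0$ because $t=5$ is odd (note $\eta(1)=0$, so this is the tangency-at-$b=1$ condition from Lemma~\ref{thm:strategy}). This gives five linear conditions — $p(x_i)=\eta(x_i)$ and $p'(x_i)=\eta'(x_i)$ for $i=1,2$, plus $p(1)=0$ — which determine the five coefficients $a_k$ as functions of $x_1,x_2$.

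First I would set up and solve that linear system, expressing each $a_k=a_k(x_1,x_2)$. Since the conditions are Hermite interpolation data (value and slope at each node) plus one endpoint value, the natural representation is through the Hermite basis, after which the coefficients $a_k$ in the monomial basis follow by expansion; the derivatives of $\eta$ that feed the data are $\eta(x)=-x\ln x$, $\eta'(x)=-\ln x-1$. Next I would form the objective $\Phi(x_1,x_2):=d\sum_{k=1}^5 a_k\binom{d+k-1}{k}^{-1}$ appearing in Eq.~\eqref{eq:optimal} and impose the stationarity conditions $\partial\Phi/\partial x_1=\partial\Phi/\partial x_2=0$. By analogy with the $t=4$ case, where the optimal nodes came out as the two roots of a quadratic, I expect the optimal $(x_1,x_2)$ here to again be the pair of roots of a quadratic whose coefficients are rational in $d$; solving that quadratic should yield the symmetric combinations $x_1+x_2$ and $x_1 x_2$ in closed form, which is all that is needed since $\Phi$ will turn out to depend on the nodes only through these symmetric functions. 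Substituting the optimal nodes back into $\Phi$ and simplifying then produces the claimed expression $W_5(d)=\ln d-\Phi_{\max}$, where the $\ln\frac{6}{(d+3)(d+4)}$ and $\ln\frac{2d+5-\sqrt{3(d+1)(d+3)}}{d+4}$ terms reflect, respectively, the contribution of the symmetric (product) data and the contribution localized at the nodes, with the surd $\sqrt{3(d+1)(d+3)}$ being the discriminant of the node-determining quadratic.

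The main obstacle will be purely the algebra: solving a $5\times 5$ parameter-dependent linear system, carrying the resulting rational-in-$(x_1,x_2,d)$ coefficients through the binomial weighting $\binom{d+k-1}{k}^{-1}$, and then differentiating, solving for the optimal nodes, and collecting logarithms into the compact two-term form stated. The binomial factors $\binom{d+k-1}{k}^{-1}$ have denominators that are products of consecutive integers, so after clearing them the stationarity equations will be high-degree polynomials in $d$; the delicate point is verifying that they factor so that the optimal nodes are given by a single quadratic with discriminant $3(d+1)(d+3)$ rather than something irreducible of higher degree. Once that factorization is confirmed, the remaining simplifications — recognizing that the prefactors collapse to $\frac{(d-1)(d+3)(d^2+2d+4)}{2d(d+1)^2(d+2)}$ and $\frac{\sqrt{d+3}\,(d-1)(d^2-2d-12)}{2\sqrt3\,d(d+1)^{3/2}(d+2)}$ — are routine but lengthy. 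As in every preceding corollary, the proof therefore reduces to the assertion that the stated nodes achieve the supremum in Eq.~\eqref{eq:optimal}, after which ``the statement follows by direct inspection.''
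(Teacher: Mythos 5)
Your proposal follows exactly the route the paper takes: instantiate Eq.~\eqref{eq:optimal} for $t=5$ with two Hermite nodes plus the $p(1)=0$ endpoint condition, optimize over $(x_1,x_2)$, and substitute back; your predicted node structure even matches the paper's stated optimum $x_{1,2} = \bigl(3d+9 \pm \sqrt{3(d+1)(d+3)}\bigr)/\bigl((d+3)(d+4)\bigr)$, including the discriminant. The paper's own proof is no more detailed than yours --- it simply asserts these nodes and concludes by direct inspection --- so your plan is a faithful (indeed slightly more explicit) version of the same argument.
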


\begin{proof}
  The supremum in Eq.~\eqref{eq:optimal} is achieved by
  \begin{align*}
    x_{1,2} = \frac{3d+9 \pm \sqrt{3(d^2+4d+3)}}{d^2+7d+12}
  \end{align*}
  Then the statement follows by direct inspection.
\end{proof}

\begin{rmk}
  \label{rmk:info5des}
  The  limit  for $d  \to  \infty$  of  the upper  bound  in
  Corollary~\ref{thm:info5des} is given by
  \begin{align*}
    W_5(d)          \to           \frac{\ln6}{2}          +
    \frac{\ln(2-\sqrt3)}{2\sqrt{3}}       \simeq      0.744
    \textrm{bit} \simeq 0.516 \textrm{nat}
  \end{align*}
\end{rmk}

\begin{cor}
  \label{thm:infoinfdes}
  For  the continuous  $d$-dimensional $\infty$-design  POVM
  $\pi_y$ the informational power $W(\pi_y)$ is given by
  \begin{align*}
    W(\pi_y) = W_\infty(d) := \ln d - \sum_{n=2}^d n^{-1}.
  \end{align*}
\end{cor}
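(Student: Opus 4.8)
The plan is to evaluate the bound of Theorem~\ref{thm:bound} \emph{exactly} for the continuous uniform POVM and then to argue that the inequality is saturated, so that it collapses to the claimed equality. First I would note that, by Definition~\ref{def:tdesign} in the limit $t=\infty$ together with Lemma~\ref{lmm:average}, the $\infty$-design POVM is the uniform POVM $\pi_y=d\,\psi_y$ with $\psi_y$ Haar-distributed on the unit sphere; hence $\Tr[\pi_y]=d$ and, for any unit pure state $\psi_x$, $\langle\psi_x|\pi_y|\psi_x\rangle=d\,|\braket{\psi_x}{\psi_y}|^2$, so the argument of $\eta$ in Theorem~\ref{thm:bound} is simply the overlap $u:=|\braket{\psi_x}{\psi_y}|^2$. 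The crucial structural observation is that this POVM is invariant under every unitary: replacing $\psi_x$ by $U\psi_x$ and changing the integration variable $\psi_y\mapsto U^\dagger\psi_y$ leaves the Haar measure unchanged. Consequently the inner $y$-integral $\int\Tr[\pi_y]\,\eta\!\left(\langle\psi_x|\pi_y|\psi_x\rangle/\Tr[\pi_y]\right)dy$ is a constant $\mathcal I$, independent of the pure state $\psi_x$.

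Because this integrand is constant over pure states, every ensemble $\psi_x$ of unit pure states yields the same value $\int\|\psi_x\|^2\,\mathcal I\,dx=\mathcal I$ for the double integral in Theorem~\ref{thm:bound}, so the infimum over ensembles is attained trivially and the bound reads $W(\pi_y)\le\ln d-\mathcal I$. To upgrade this to an equality I would exhibit a matching ensemble by inspecting the proof of Theorem~\ref{thm:bound}: the single inequality there was the discarding of the nonnegative term $\kl{\Tr[\rho\pi_y]}{\Tr[\pi_y]/d}$, where $\rho$ is the average state of the ensemble. For the uniform ensemble one has $\rho=\int\psi_x\,dx=\openone_d/d$ by Lemma~\ref{lmm:average} with $k=1$, whence $\Tr[\rho\pi_y]=\Tr[\pi_y]/d$ and the discarded term vanishes identically. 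Thus the uniform ensemble achieves $I(\psi_x,\pi_y)=\ln d-\mathcal I$, so that $W(\pi_y)=\sup_{\rho_x}I(\rho_x,\pi_y)\ge\ln d-\mathcal I$, and combining the two directions gives $W(\pi_y)=\ln d-\mathcal I$ exactly. The same conclusion can alternatively be reached through the duality of Theorem~\ref{thm:duality}.

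It then remains to compute $\mathcal I$. Fixing $\psi_x$ and integrating over the Haar-random $\psi_y$, the overlap $u$ has the Beta distribution with density $(d-1)(1-u)^{d-2}$ on $[0,1]$; equivalently, its moments are fixed by the index of coincidence of Lemma~\ref{thm:coincidence}, namely $\int u^k\,(d-1)(1-u)^{d-2}\,du=C_k/d=\binom{d-1+k}{k}^{-1}$ for every $k$. With $\eta(u)=-u\ln u$ this reduces $\mathcal I$ to the single integral $\mathcal I=d(d-1)\int_0^1(1-u)^{d-2}(-u\ln u)\,du$. I would evaluate it by differentiating the Euler Beta integral $B(a,d-1)=\int_0^1 u^{a-1}(1-u)^{d-2}\,du$ with respect to $a$ at $a=2$: this introduces the digamma values $\Psi(2)-\Psi(d+1)=-\sum_{n=2}^d n^{-1}$ against the prefactor $B(2,d-1)=1/\bigl(d(d-1)\bigr)$, and once the overall sign of $\eta$ and the factor $d(d-1)$ are accounted for one obtains $\mathcal I=\sum_{n=2}^d n^{-1}$. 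Substituting yields $W(\pi_y)=\ln d-\sum_{n=2}^d n^{-1}$, as claimed.

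The main obstacle I anticipate is not the algebra but the handling of the \emph{continuous} limit: the $\infty$-design POVM and the uniform ensemble are continuous objects, so one must justify that the Davies-like reduction to pure-state ensembles underlying Theorem~\ref{thm:bound}, the exchange of integration order, and the convergence of the $\eta$-integral near $u=0$ all remain valid here. The other delicate point is the two-sided nature of the argument: Theorem~\ref{thm:bound} only supplies an upper bound, so the equality genuinely requires both the covariance-induced constancy of the integrand, which trivializes the infimum, \emph{and} the explicit saturating ensemble, whose average state is maximally mixed and thereby annihilates the discarded relative-entropy term.
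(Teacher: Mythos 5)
Your proposal is correct, and it reaches the result by a genuinely different route on both halves of the argument. For the upper bound, the paper expands $\eta$ in a Taylor series around $x=1$, pairs each power $x^k$ with the index-of-coincidence moment $C_k=d\binom{d-1+k}{k}^{-1}$ from Lemma~\ref{thm:coincidence}, and resums the resulting double series into $\sum_{n=2}^d n^{-1}$ by appeal to a handbook identity; you instead exploit the unitary covariance of the $\infty$-design POVM to identify the explicit $\mathrm{Beta}(1,d-1)$ law of the overlap $u$ and evaluate $d(d-1)\int_0^1(1-u)^{d-2}(-u\ln u)\,du$ in closed form by differentiating the Euler Beta integral, which yields the digamma difference $\Psi(d+1)-\Psi(2)=\sum_{n=2}^d n^{-1}$ directly and avoids the series resummation. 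The two computations are equivalent (the moments $C_k/d$ for all $k$ determine the Beta law by Hausdorff moment determinacy), but the paper's version formally needs only the design moment conditions, while yours needs the covariant form of the POVM -- an assumption that is harmless here since the corollary refers to \emph{the} continuous $\infty$-design. For the saturation, the paper simply cites Ref.~\cite{JRW94} for the orthonormal ensemble, whereas you derive it internally: covariance makes the $\eta$-integral independent of the ensemble, and the sole inequality in the proof of Theorem~\ref{thm:bound} is the discarded term $\kl{\Tr[\rho\pi_y]}{\Tr[\pi_y]/d}$, which vanishes for any pure-state ensemble with maximally mixed average. This is more self-contained than the paper's citation and has the added value of characterizing the full class of optimal ensembles; your closing caveats about the continuous-measure technicalities are appropriate but do not affect the substance.
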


\begin{proof}
  By  expanding $\eta(x)$  in Taylor  series around  $1$ and
  applying the binomial theorem one has
  \begin{align*}
    \eta(x)  =  1  -   x  -  \sum_{n=2}^\infty  \sum_{k=0}^n
    \frac{(n-2)!}{k!(n-k)!}(-x)^{k}.
  \end{align*}
  Then         by        Theorem~\ref{thm:bound}         and
  Lemma~\ref{thm:coincidence} one has
  \begin{align*}
    & W(\pi_y) \\ \le & \ln d - d + 1 + d \sum_{n=2}^\infty
    \sum_{k=0}^n              \frac{(n-2)!(-)^{k}}{k!(n-k)!}
    \binom{d-1+k}{k}^{-1}.
  \end{align*}
  Then  by direct  inspection (see  e.g. Ref.~\cite{OLBC10})
  one has
  \begin{align*}
    W(\pi_y) \le \ln d - \sum_{n=2}^d n^{-1}.
  \end{align*}
  Since  this   bound  is   saturated  by   any  orthonormal
  ensemble~\cite{JRW94}, the statement follows.
\end{proof}

\begin{rmk}
  The  limit   for  $d  \to  \infty$   of  $W_\infty(d)$  in
  Corollary~\ref{thm:infoinfdes} is given by
  \begin{align*}
    W_\infty(d)  \to 1  - \gamma  \simeq 0.610  \textrm{bit}
    \simeq 0.423 \textrm{nat},
  \end{align*}
  where $\gamma$ represents the Euler-Mascheroni constant.
\end{rmk}

We  conclude  this  subsection by  summarizing  the  derived
bounds  in Table~\ref{tab:bounds}  and illustrating  them in
Fig.~\ref{fig:bounds}.

\begin{widetext}
  \begin{center}
    \begin{table}[h]
      \begin{tabular} {| >{$}m{.025\textwidth}<{$} | >{$}m{.8\textwidth}<{$} | >{$}m{.15\textwidth}<{$} |}
        \hline {\bf  t} &  {\bf W_t(d)}  & {\bf  \lim_{d \to
            \infty}} \\ \hline 1 & \ln d & \infty \\ \hline
        2  &  \ln\frac{2d}{d+1} &  \ln  2  \\ \hline  3  &
        \ln\frac{2d}{d+2} + 2\frac{\ln\frac{d+2}2}{d(d+1)}
        & \ln  2 \\  \hline 4 &  \frac12 \ln  \frac {6d^2}
                 {(d+2)(d+3)} +  \frac {(d-3)\sqrt{3d(d+2)}}
                 {6d(d+1)} \ln  \frac {2d+3-\sqrt{3d(d+2)}}
                 {d+3}       &       \frac{\ln6}{2}       +
                 \frac{\ln(2-\sqrt3)}{2\sqrt{3}}  \\ \hline
                 5  & \ln(d)  + \frac{(d-1)(d+3)(d^2+2d+4)}
                 {2d(d+1)^2(d+2)}     \ln     \frac     {6}
                 {(d+3)(d+4)}             +            \frac
                 {\sqrt{d+3}(d-1)(d^2-2d-12)}
                 {2\sqrt3d(d+1)^{\frac{3}{2}}(d+2)}     \ln
                 \frac  {2d+5-\sqrt{3(d+1)(d+3)}}   {d+4}  &
                 \frac{\ln6}{2}                           +
                 \frac{\ln(2-\sqrt3)}{2\sqrt{3}}  \\ \hline
                 \infty & \ln d - \sum_{n=2}^d n^{-1} & 1 -
                 \gamma \\ \hline
      \end{tabular}
      \caption{Upper  bounds $W_t(d)$  on the  informational
        power $W(\pi_y)$  of any  $d$-dimensional $t$-design
        POVM $\pi_y$  for $t  \in [1,5]$  and $t  = \infty$,
        along with their asymptotic formulae.}
      \label{tab:bounds}
    \end{table}
  \end{center}
\end{widetext}

\begin{figure}[h]
  \includegraphics[width=\columnwidth]{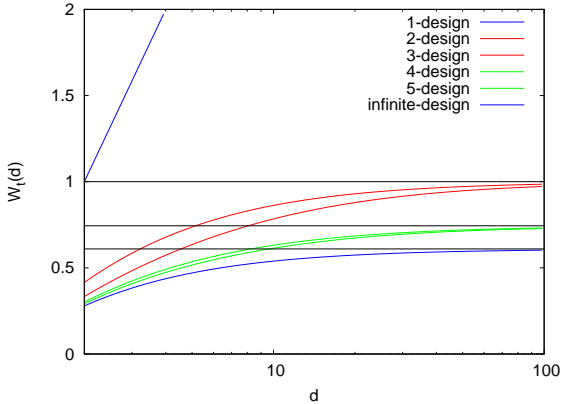}
  \caption{(Color  online)  Upper  bounds  $W_t(d)$  on  the
    informational power $W(\pi_y)$ (in  bits) of any quantum
    $t$-design POVM  $\pi_y$ as a function  of the dimension
    $d$ (on a log scale). From  top to bottom: $t = 1$ (blue
    line), $t  = 2$  and $3$  (red lines), $t  = 4$  and $5$
    (green lines), and $t = \infty$ (blue line), as provided
    by  Corollaries~\ref{thm:info1des},  \ref{thm:info2des},
    \ref{thm:info3des},                  \ref{thm:info4des},
    \ref{thm:info5des},            and~\ref{thm:infoinfdes},
    respectively.     The    asymptotes   $W_{2,3}(d)    \to
    1\textrm{bit}$, $W_{4,5}(d)  \to 0.744\textrm{bit}$, and
    $W_\infty(d)  \to 0.609  \textrm{bit}$ are  depicted too
    (horizontal black lines).}
 \label{fig:bounds}
\end{figure}

\subsection{Tightness}
\label{sec:tightness}

The bound  in Theorem~\ref{thm:infotdes} is of  course tight
for $t =  1$ for any dimension $d$,  where optimal ensembles
are  given by  any orthonormal  basis~\cite{Hol73}. In  this
subsection  we   prove  tightness  for   $2,3,5$-designs  in
dimension $2$, and for $2$-designs  in dimension $3$. For $d
=  2$ the  Bloch-sphere  representation  provides a  natural
isomorphism  between  $2$-dimensional  POVMs and  solids  in
$\mathbb{R}^3$, so we  will denote POVMs by the  name of the
corresponding      solid      (tetrahedron,      octahedron,
icosahedron). Formal  definitions of each POVM  can be found
in~\cite{DBO14, SS14}.

The informational power  of the $2$-dimensional tetrahedral,
octahedral,   and   icosahedral   POVMs  were   derived   in
Refs.~\cite{DDS11, SS14, DBO14,  Szy14, Dal14}.  By noticing
that   these  POVMs   are   $2$-,   $3$-  and   $5$-designs,
respectively,  their  informational power  directly  follows
from Theorem~\ref{thm:infotdes}.

\begin{cor}
  \label{thm:tetrahedral}
  The $2$-dimensional  tetrahedral (SIC)  POVM $\pi_y$  is a
  $2$-design, its informational power is given by
  \begin{align*}
    W_2(2) = \ln\frac43,
  \end{align*}
  and  the optimal  (anti-tetrahedral) ensemble  $\psi_x$ is
  such that $\psi_x \pi_x = 0$ for any $x$.
\end{cor}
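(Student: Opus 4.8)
The plan is to prove the three assertions of the corollary separately: that the tetrahedral POVM is a $2$-design, that its informational power equals $\ln(4/3)$, and that the anti-tetrahedral ensemble is optimal and obeys $\psi_x\pi_x=0$.

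First I would set up the Bloch representation, writing the four tetrahedral effects as $\pi_y=\frac12\ket{\phi_y}\bra{\phi_y}$, $y=1,\dots,4$, with Bloch vectors $\vec n_y$ pointing to the vertices of a regular tetrahedron, so that $\vec n_x\cdot\vec n_y=1$ for $x=y$ and $\vec n_x\cdot\vec n_y=-1/3$ otherwise. Since this is the $d=2$ SIC POVM, the $2$-design property follows from the remark after Lemma~\ref{lmm:average}; if a self-contained check is preferred, one verifies directly that $\sum_y\pi_y^{\otimes2}/\Tr[\pi_y]=\frac23 P_{\textrm{sym}}=d\binom{d+1}{2}^{-1}P_{\textrm{sym}}$, which is exactly the $k=2$ instance of Definition~\ref{def:tdesign} (the trace check $\sum_y(\Tr\pi_y^{\otimes2}/\Tr\pi_y)=\frac23\Tr P_{\textrm{sym}}$ fixes the constant). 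Granting this, Corollary~\ref{thm:info2des} evaluated at $d=2$ yields at once the upper bound $W(\pi_y)\le W_2(2)=\ln(4/3)$.

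Next I would produce a matching lower bound by an explicit ensemble. Take the anti-tetrahedral ensemble of four pure states $\psi_x$ with Bloch vectors $-\vec n_x$ and uniform weights $1/4$, so that the Born probabilities are $p_{x,y}=\frac18\,|\braket{\psi_x}{\phi_y}|^2=\tfrac{1-\vec n_x\cdot\vec n_y}{16}$. The diagonal entries vanish, $p_{x,x}=0$, which is precisely the orthogonality claim $\psi_x\pi_x=0$, while the off-diagonal entries all equal $1/12$. From these one reads off uniform marginals $p_x=p_y=1/4$ and computes the mutual information $I(\psi_x,\pi_y)=\sum_{x\ne y}\frac{1}{12}\ln\frac{1/12}{1/16}=\ln(4/3)$. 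Since this coincides with the upper bound just derived, the supremum in Definition~\ref{def:infopower} is attained, giving $W(\pi_y)=\ln(4/3)$ and certifying the anti-tetrahedral ensemble as optimal.

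The computation is entirely elementary once the tetrahedral inner-product structure ($\pm1$ and $-1/3$) is in place, so I do not anticipate a genuine technical obstacle. The one point deserving care is logical rather than computational: optimality of the anti-tetrahedral ensemble is not obtained from a separate variational argument but is forced by the coincidence of its mutual information with the rigorous upper bound of Corollary~\ref{thm:info2des}. Accordingly I would track the two normalizations consistently, the ensemble weights $1/4$ and the effect traces $1/2$, so that both bounds are expressed in the same units and genuinely meet, which is what pins the equality and the optimality simultaneously.
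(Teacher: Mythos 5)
Your proposal is correct and follows essentially the same route as the paper: invoke the SIC $\Rightarrow$ $2$-design fact to apply Corollary~\ref{thm:info2des} at $d=2$, then check that the anti-tetrahedral ensemble saturates the resulting bound $\ln(4/3)$. The only difference is that you spell out the saturation computation ($p_{x,x}=0$, $p_{x,y}=1/12$ off-diagonal, uniform marginals) that the paper leaves implicit, and your arithmetic and normalizations are consistent throughout.
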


\begin{proof}
  Any  SIC  POVM  is  a  $2$-design~\cite{RBSC04},  and  the
  anti-tetrahedral   ensemble   saturates   the   bound   in
  Corollary~\ref{thm:info2des}.
\end{proof}

\begin{cor}
  \label{thm:octahedral}
  The $2$-dimensional  octahedral (complete  MUB) POVM  is a
  $3$-design, its informational power is given by
  \begin{align*}
    W_3(2) = \frac16 \ln4,
  \end{align*}
  and  the optimal  (anti-octahedral)  ensemble $\psi_x$  is
  such that $\psi_x \pi_x = 0$ for any $x$.
\end{cor}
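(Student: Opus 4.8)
The plan is to follow the same three-part strategy as the proof of Corollary~\ref{thm:tetrahedral}: first establish that the octahedral POVM is a $3$-design, then evaluate the bound of Corollary~\ref{thm:info3des} at $d=2$, and finally exhibit an ensemble that saturates it, namely the anti-octahedral one.

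For the design property, I would recall that the six octahedral effects are proportional to the pure states whose Bloch vectors are the vertices $\pm\hat{x},\pm\hat{y},\pm\hat{z}$ of a regular octahedron. Under the Bloch correspondence a spherical $t$-design on $S^2$ induces a quantum $t$-design, and the octahedron is a classical spherical $3$-design; hence $\pi_y$ is a quantum $3$-design. Here I would stress explicitly that, although complete MUBs are in general only $2$-designs, in $d=2$ the central symmetry of the octahedron promotes it to a $3$-design. Alternatively one checks Definition~\ref{def:tdesign} directly for $k=2,3$, the $k=3$ identity holding because all odd Bloch moments cancel by antipodal symmetry. Granting this, Corollary~\ref{thm:info3des} applies at $d=2$, and substituting into $W_3(d)=\ln\frac{2d}{d+2}+2\frac{\ln\frac{d+2}{2}}{d(d+1)}$ gives $W_3(2)=0+\frac13\ln2=\frac16\ln4$, so that $W(\pi_y)\le\frac16\ln4$.

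It then remains to prove saturation by the equally weighted six-state ensemble $\psi_x$ with each $\psi_x$ orthogonal to $\pi_x$ (so $\psi_x\pi_x=0$). Writing $\pi_y=\frac13\ket{\phi_y}\bra{\phi_y}$ and $\psi_x=\frac16\ket{\phi_x^\perp}\bra{\phi_x^\perp}$, I would compute the Born probabilities $p_{x,y}=\Tr[\psi_x\pi_y]$, which by the octahedral geometry take only three values: $0$ when $y=x$, $\frac{1}{18}$ when $\phi_y$ is the antipode of $\phi_x$, and $\frac{1}{36}$ for the four perpendicular outcomes. Both marginals are uniform, $p_x=p_y=\frac16$, so that only the antipodal terms contribute to $I(\psi_x,\pi_y)=\sum_{x,y}p_{x,y}\ln\frac{p_{x,y}}{p_xp_y}$, yielding $\frac13\ln2=\frac16\ln4$ and matching the upper bound. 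Equivalently, one may simply invoke the value of the octahedral informational power already derived in Refs.~\cite{DDS11,SS14,DBO14}. The main obstacle is precisely this saturation step: the design property and the numerical value of the bound follow from symmetry and direct substitution, whereas confirming that the anti-octahedral ensemble is optimal and that its mutual information attains the bound exactly requires the explicit probability bookkeeping above.
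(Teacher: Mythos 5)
Your proposal is correct and follows exactly the paper's (very terse) proof: verify the $3$-design property of the octahedral POVM, evaluate $W_3(2)$ from Corollary~\ref{thm:info3des}, and show the anti-octahedral ensemble saturates the bound. You simply supply the details (the antipodal-symmetry argument for the design property and the explicit probabilities $0$, $\tfrac1{18}$, $\tfrac1{36}$ giving $I=\tfrac13\ln 2$) that the paper leaves to ``direct inspection,'' and your arithmetic checks out.
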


\begin{proof}
  It follows  by direct inspection that  the $2$-dimensional
  octahedral POVM  is a $3$-design, and  the anti-octahedral
  ensemble        saturates        the       bound        in
  Corollary~\ref{thm:info3des}.
\end{proof}

\begin{cor}
  \label{thm:icosahedral}
  The $2$-dimensional icosahedral POVM  is a $5$-design, its
  informational power is given by
  \begin{align*}
    W_5(2) =  \ln2 - \frac5{12}\ln5  - \frac{\sqrt{5}}{12}
    \ln\frac{9-3\sqrt5}6
  \end{align*}
  and  the optimal  (anti-icosahedral) ensemble  $\psi_x$ is
  such that $\psi_x \pi_x = 0$ for any $x$.
\end{cor}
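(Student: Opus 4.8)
The plan is to follow the same three-step template as in Corollaries~\ref{thm:tetrahedral} and~\ref{thm:octahedral}: first establish that the icosahedral POVM is a $5$-design so that Corollary~\ref{thm:info5des} applies, then check that $W_5(2)$ reduces to the stated expression, and finally exhibit a saturating ensemble. The upper bound $W(\pi_y)\le W_5(2)$ is then immediate, and substituting $d=2$ into Corollary~\ref{thm:info5des} yields the optimal nodes $x_{1,2}=\tfrac12\pm\tfrac{1}{2\sqrt5}$ and, by direct inspection, the claimed value $\ln 2-\tfrac{5}{12}\ln 5-\tfrac{\sqrt5}{12}\ln\tfrac{9-3\sqrt5}{6}$. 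So the genuine content lies in the design property and in tightness.

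To obtain the $5$-design property I would use the Bloch-sphere correspondence between qubit quantum $t$-designs and spherical $t$-designs on $S^2$: since $\ket{\psi}\bra{\psi}=\tfrac12(\openone_2+\vec n\cdot\vec\sigma)$, the averages in Definition~\ref{def:tdesign} for $k\le t$ are polynomials of degree at most $t$ in the Bloch vector $\vec n$, and the identity of Lemma~\ref{lmm:average} is reproduced exactly when the twelve vertex directions integrate all such polynomials correctly. The icosahedron being a spherical $5$-design is classical; alternatively one verifies Definition~\ref{def:tdesign} directly for $k=1,\dots,5$. That it is not a $6$-design is what singles out $t=5$ as the tightest applicable level of the hierarchy, consistent with exact tightness.

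For tightness I would construct the anti-icosahedral ensemble: the twelve equally weighted unit states $\psi_x$ whose Bloch vectors are the antipodes of the POVM directions. Because the icosahedral inner products take only the values $\pm1$ and $\pm1/\sqrt5$, the normalized overlaps $\bra{\psi_x}\pi_y\ket{\psi_x}/(\|\psi_x\|^2\Tr[\pi_y])=\tfrac12(1-\vec m_x\cdot\vec m_y)$ take only the four values $0$, $\tfrac12-\tfrac{1}{2\sqrt5}$, $\tfrac12+\tfrac{1}{2\sqrt5}$, and $1$; in particular $\psi_x\pi_x=0$, as stated. The key observation is that the two interior values coincide exactly with the optimal nodes $x_{1,2}$ of Corollary~\ref{thm:info5des}, while $0$ and $1$ are precisely the points where the interpolating polynomial $p$ and $\eta$ both vanish ($p(0)=0$ since $p$ has no constant term, and $p(1)=0$ for odd $t$). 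Hence the polynomial lower bound $p\le\eta$ of Lemma~\ref{thm:strategy} is tight at every overlap value that actually occurs. Moreover, since the twelve states form a $1$-design their average is $\tfrac12\openone_2$, so $\Tr[\rho\pi_y]=\Tr[\pi_y]/d$ and the relative-entropy term discarded in the proof of Theorem~\ref{thm:bound} vanishes. Both inequalities in the derivation of Theorem~\ref{thm:infotdes} therefore become equalities, giving $W(\pi_y)=I(\psi_x,\pi_y)=W_5(2)$.

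The main obstacle is the single geometric calculation underpinning tightness: verifying that the icosahedral overlaps $\tfrac12(1\pm1/\sqrt5)$ land exactly on the nodes $x_{1,2}=(3d+9\pm\sqrt{3(d^2+4d+3)})/(d^2+7d+12)$ at $d=2$, and not merely close to them. Once this coincidence and the barycenter condition are confirmed, every inequality in the chain saturates and no separate optimization over ensembles is required; everything else is the same routine substitution into Corollary~\ref{thm:info5des} already used in the octahedral and tetrahedral cases.
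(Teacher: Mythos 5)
Your proposal is correct and follows the same route as the paper, which simply asserts by direct inspection that the icosahedral POVM is a $5$-design and that the anti-icosahedral ensemble saturates the bound of Corollary~\ref{thm:info5des}. Your elaboration of the saturation mechanism --- the overlaps $\tfrac12(1\pm1/\sqrt5)$ landing exactly on the nodes $x_{1,2}$ at $d=2$, the values $0$ and $1$ being common zeros of $p$ and $\eta$, and the barycenter condition killing the discarded relative-entropy term --- correctly fills in the details the paper leaves implicit.
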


\begin{proof}
  It follows  by direct inspection that  the $2$-dimensional
  icosahedral POVM is a $5$-design, and the anti-icosahedral
  ensemble        saturates        the       bound        in
  Corollary~\ref{thm:info5des}.
\end{proof}

In  Ref.~\cite{Szy14} it  was shown  that the  informational
power of group covariant  $3$-dimensional SIC POVMs is given
by $W_2(3) = \ln\frac32$.  By noticing that these POVMs are
$2$-designs,  the  optimality   of  this  value  immediately
follows from Corollary~\ref{thm:info2des}.

\section{Conclusion and outlook}
\label{sec:conclusion}

In this  work we  provided in  Theorem~\ref{thm:infotdes} an
upper bound  on the information  that can be carried  by any
quantum  $t$-design  for  any  $t$, as  a  function  of  the
dimension       of       the      system,       and       in
Corollaries~\ref{thm:info1des},          \ref{thm:info2des},
\ref{thm:info3des},  \ref{thm:info4des}, \ref{thm:info5des},
and~\ref{thm:infoinfdes}   we    derived   closed   analytic
expressions  for such  bound for  $t \in  [1, 5]$  and $t  =
\infty$.   The  Holevo   upper  bound~\cite{Hol73}  and  the
subentropy  lower   bound~\cite{JRW94}  were   recovered  as
particular cases for $t = 1$ and $t = \infty$, respectively.
In this sense, the  resulting hierarchy of bounds represents
a trade-off between  the uniformity of a  quantum system and
the amount information it  can carry. By deriving asymptotic
formulae  for  large dimensions,  we  also  showed that  the
statistics generated by any $t$-design contains no more than
a single bit of information,  and that this amount decreases
with            $t$.              Furthermore,            in
Corollaries~\ref{thm:tetrahedral},     \ref{thm:octahedral},
and~\ref{thm:icosahedral}  we showed  the  tightness of  our
bounds  for  qubits  and  qutrits.   Finally,  as  a  direct
consequence  of   Theorem~\ref{thm:duality}  it  immediately
follows  that   all  the  presented  upper   bounds  on  the
informational  power  of  $t$-design POVMs  holds  as  upper
bounds   on  the   accessible   information  of   $t$-design
ensembles.

Various open problems related  to the accessible information
and   informational  power   of  quantum   $t$-designs  were
discussed  in  Refs.~\cite{DBO14,  Dal14}. In  view  of  the
results presented here, we may add to the list the following
questions.  The  asymptotic formulae  for the bounds  on the
informational    power   of    $2$-    and   $3$-    designs
(Remarks~\ref{rmk:info2des} and~\ref{rmk:info3des}), as well
as      those       for      $4$-       and      $5$-designs
(Remarks~\ref{rmk:info4des}   and~\ref{rmk:info5des}),   are
pairwise identical.  Can this  be generalized to higher $t$?
Can  this phenomenon  be  given  a physical  interpretation?
Moreover,  for all  the  $t$-design qubit  POVMs $\pi_y$  we
explicitly   optimized   (Corollaries~\ref{thm:tetrahedral},
\ref{thm:octahedral},     and~\ref{thm:icosahedral}),    the
optimal ensemble $\psi_x$ turned out to be such that $\psi_x
\pi_x =  0$ for any  $x$.  Is it  always the case  for qubit
$t$-designs?  Finally,  closed analytic expressions  for the
bounds provided by Theorem~\ref{thm:infotdes}  for $t \ge 6$
require lengthy calculations, therefore their derivation can
be made easier by the use of a symbolic calculation package.
This will be done  in a forthcoming work~\cite{Dal15}, where
their  tightness  and  asymptotic   formulae  will  also  be
discussed.

During the  preparation of  this manuscript, the  author was
informed   by    Wojciech   S{\l}omczy\'{n}ski    and   Anna
Szymusiak~\cite{SS15} of  a recent result of  theirs showing
that the bound  in Corollary~\ref{thm:info2des} is saturated
by the $64$ Hoggar lines SIC-POVM in dimension $8$.

\section*{Acknowledgments}

The author is indebted to Francesco Buscemi and Massimiliano
F.  Sacchi  for their  insightful comments  and suggestions,
and  to Anna  Szymusiak  for  stimulating discussions  about
Hermite  polynomial  interpolation.    The  author  is  also
grateful to Alessandro Bisio,  Chris Fuchs, Paolo Perinotti,
Wojciech  S{\l}omczy\'{n}ski,  and  Vlatko Vedral  for  very
useful discussions.  This work was supported by the Ministry
of Education and the Ministry of Manpower (Singapore).

\end{document}